\newtheorem{theorem}{Theorem}[section]
\newtheorem{lemma}[theorem]{Lemma}
\newtheorem{definition}[theorem]{Definition}
\newtheorem{remark}[theorem]{Remark}
\newtheorem{conjecture}[theorem]{Conjecture}
\DeclareMathAlphabet{\mathcal}{OMS}{cmsy}{m}{n}
\newcommand {\NP} {\mbox{NP}}
\newcommand {\SUBEXP} {\mbox{SUBEXP}}
\newcommand {\FPT} {\mbox{FPT}}
\newcommand{\OPT}{\textup{OPT}\xspace}
\newcommand{\dst}{\mbox{Directed Steiner Tree}\xspace}
\newcommand{\dsf}{\mbox{Directed Steiner Forest}\xspace}
\newcommand{\dsn}{\mbox{Directed Steiner Network}\xspace}
\newcommand{\mec}{\mbox{Minimum Size Edge Cover}\xspace}
\newcommand{\mcc}{\mbox{Multicolored Clique}\xspace}
\newcommand{\scss}{\mbox{Strongly Connected Steiner Subgraph}\xspace}
\newcommand*\samethanks[1][\value{footnote}]{\footnotemark[#1]}
\begin{document}

\title{Fixed-Parameter and Approximation Algorithms: A New Look}
\author{Rajesh Chitnis  \thanks{Department of Computer Science, University of Maryland at College Park, USA. Supported in part by NSF CAREER award 1053605, NSF grant CCF-1161626,
      ONR YIP award N000141110662, DARPA/AFOSR grant FA9550-12-1-0423,
      and a University of Maryland Research and Scholarship Award (RASA). The first author is also supported by
      a Simons Award for Graduate Students in Theoretical Computer Science. The second author is also with AT\&T Labs. Email: \{\texttt{rchitnis,
hajiagha\}@cs.umd.edu}}
  \and
  MohammadTaghi Hajiaghayi\samethanks[1]
 \and Guy Kortsarz\thanks{Rutgers University, Camden, NJ. Supported in part by NSF grant number 434923. Email: \texttt{guyk@camden.rutgers.edu}}
}



\date{\today}

\maketitle
\begin{abstract}
A Fixed-Parameter Tractable (\FPT) $\rho$-approximation algorithm for a minimization (resp. maximization) parameterized problem
$P$ is an FPT algorithm that, given an instance $(x, k)\in P$ computes a solution of cost at most $k \cdot \rho(k)$ (resp.
$k/\rho(k)$) if a solution of cost at most (resp. at least) $k$ exists; otherwise the output can be arbitrary. For well-known
intractable problems such as the W[1]-hard \mbox{Clique} and W[2]-hard \mbox{Set Cover} problems, the natural question is
whether we can get any \FPT-approximation.
It is widely believed that both \mbox{Clique} and \mbox{Set-Cover} admit no FPT $\rho$-approximation algorithm, for any
increasing function $\rho$. However, to the best of our knowledge, there has been no progress towards proving this conjecture.
Assuming standard conjectures such as the Exponential Time Hypothesis (ETH)~\cite{eth-paturi} and the Projection Games
Conjecture (PGC)~\cite{r3}, we make the first progress towards proving this conjecture by showing that
\begin{itemize}
\item Under the ETH and PGC, there exist constants $F_1, F_2 >0$ such that the \mbox{Set Cover} problem does not admit a
    FPT approximation algorithm with ratio $k^{F_1}$ in $2^{k^{F_2}}\cdot \text{poly}(N,M)$ time, where $N$ is the size of
    the universe and $M$ is the number of sets.
\item Unless $\NP\subseteq \SUBEXP$, for every $1> \delta > 0$ there exists a constant $F(\delta)>0$ such that
    \mbox{Clique} has no FPT cost approximation with ratio $k^{1-\delta}$ in $2^{k^{F}}\cdot \text{poly}(n)$ time, where
    $n$ is the number of vertices in the graph.
\end{itemize}
In the second part of the paper we consider various W[1]-hard problems such as {\dst}, {\dsf}, Directed Steiner
Network and {\mec}.
For all these problem we give polynomial time $f(\text{OPT})$-approximation algorithms for some small function $f$ (the
largest approximation ratio we give is $\text{OPT}^2$).
Our results indicate a potential separation between the classes W[1] and W[2]; since no W[2]-hard problem is known to have a
polynomial time $f(\text{OPT})$-approximation for any function $f$. Finally, we answer a question by Marx~\cite{daniel-survey}
by showing the well-studied Strongly Connected Steiner Subgraph problem (which is W[1]-hard and does not have any polynomial
time constant factor approximation) has a constant factor FPT-approximation.

\end{abstract}

\section{Introduction}
\label{sec:intro}

\emph{Parameterized Complexity} is a two-dimensional generalization of ``P vs. NP'' where in addition to the overall input
size $n$, one studies the effects on the computational complexity of a secondary measurement that captures additional relevant
information. This additional information can be, for example, a structural restriction on the input distribution considered,
such as a bound on the treewidth of an input graph or the size of a solution. For general background on the theory
see~\cite{downey-fellows}. For decision problems with input size $n$, and a parameter $k$, the two dimensional analogue (or
generalization) of P, is solvability within a time bound of $O(f(k)n^{O(1)})$, where $f$ is a function of $k$ alone.
Problems having such an algorithm are said to be \emph{fixed parameter tractable} (FPT). The $W$-hierarchy is a collection of
computational complexity classes: we omit the technical definitions here. The following relation is known amongst the classes
in the $W$-hierarchy: $\text{FPT}=W[0]\subseteq W[1]\subseteq W[2]\subseteq \ldots$. It is widely believed that $\FPT\neq \text{W[1]}$, and
hence if a problem is hard for the class $W[i]$ (for any $i\geq 1$) then it is considered to be fixed-parameter intractable.
We say that a problem is W-hard if it is hard for the class W[i] for some $i\geq 1$. When the parameter is the size of the
solution then the most famous examples of W[1]-hard and W[2]-hard problems are \mbox{Clique} and \mbox{Set Cover}
respectively. We define these two problems below:
\begin{center}
\noindent\framebox{\begin{minipage}{6.0in}
\mbox{\textbf{Clique}}\\
\emph{Input }: An undirected graph $G=(V,E)$, and an integer $k$\\
\emph{Problem}: Does $G$ have a clique of size at least $k$?\\
\emph{Parameter}: $k$
\end{minipage}}
\end{center}

\begin{center}
\noindent\framebox{\begin{minipage}{6.0in}
\mbox{\textbf{Set Cover}}\\
\emph{Input}: Universe $U=\{u_1,u_2,\ldots,u_n\}$  and a collection $\mathcal{S}=\{S_1,S_2,\ldots,S_m\}$ of subsets of $U$
such that
$\bigcup_{j=1}^{m} S_j = U $.\\
\emph{Problem}: Is there a subcollection $\mathcal{S}'\subseteq \mathcal{S}$ such that $\mathcal{S}'\leq  k$ and
        $\bigcup_{S_i\in \mathcal{S}'} S_i= U$?\\
\emph{Parameter}: $k$
\end{minipage}}
\end{center}

The next natural question is whether these fixed-parameter intractable problems at least admit parameterized approximation
algorithms.

\subsection{Parameterized Approximation Algorithms}
\label{sec:intro-parameterized-approximation}

We follow the notation from Marx~\cite{marx-ccc2010}. Any \NP-optimization problem can be described as $O=(I, \text{sol}, \text{cost},
\text{goal})$, where $I$ is the set of instances, $\text{sol}(x)$ is the set of feasible solutions for instance $x$, the
positive integer $\text{cost}(x; y)$ is the cost of solution $y$ for instance $x$, and goal is either min or max. We assume
that $\text{cost}(x, y)$ can be computed in polynomial time, $y\in \text{sol}(x)$ can be decided in polynomial time, and $|y|$
= $|x|^{O(1)}$ holds for every such $y$.

\begin{definition}
\label{defn-1} Let $\rho : \mathbb{N} \rightarrow \mathbb{R}_{\geq 1}$ be a computable function such that $\rho(k)\geq 1$ for
every $k\geq 1$; if \emph{goal}=\emph{min} then $k\cdot \rho(k)$ is nondecreasing and if the \emph{goal}=\emph{max} then $k/\rho(k)$ is unbounded and
nondecreasing. An \textbf{FPT approximation algorithm} with approximation ratio $\rho$ for $O$ is an algorithm $\mathbb{A}$
that, given an input $(x, k)\in \Sigma^{*}\times \mathbb{N}$ satisfying $\emph{sol}(x)\neq \emptyset$ and
\begin{equation}\label{eqn:star}
   \begin{cases}
   opt(x)\leq k & \text{if}\ \emph{goal}=\emph{min} \\
   opt(x)\geq k & \text{if}\ \emph{goal}=\emph{max}
    \end{cases}
\tag{*}
\end{equation}
computes $y\in \emph{sol}(x)$ such that
\begin{equation}\label{eqn:star-star}
   \begin{cases}
   \emph{cost}(x,y)\leq k\cdot \rho(k) & \text{if}\ \emph{goal}=\emph{min} \\
   \emph{cost}(x,y)\geq k/\rho(k) & \text{if}\ \emph{goal}=\emph{max}
    \end{cases}
\tag{**}
\end{equation}
We require that on input $(x,k)$ the algorithm $\mathbb{A}$ runs in $f(k)\cdot |x|^{O(1)}$ time, for some computable function
$f$.
\end{definition}

Note that if the input does not satisfy (*), then the output can be arbitrary.

\begin{remark}
\label{normalized} Given an output $y\in \emph{sol}(x)$ we can check in FPT time if it satisfies \emph{(**)}. Hence we can assume
that an FPT approximation algorithm always\footnote{even if the input does not satisfy (*)} either outputs a $y\in \emph{sol}(x)$
that satisfies \emph{(**)} or outputs a default value \texttt{reject}. We call such an FPT approximation algorithm that has
this property as \textbf{normalised}.
\end{remark}

Classic polynomial-time approximation algorithms determine the performance ratio by comparing the output with the optimum. In
FPT approximation algorithms there is a subtle difference: we compare the output to the parameter to determine the
approximation ratio. Fellows~\cite{dagstuhl-fellows} asked about finding an FPT approximation algorithm for W[2]-hard
Dominating Set (which is a special case of Set Cover), or ruling out such a possibility. The following conjecture due to Marx (personal communication) is widely believed in the FPT community:

\begin{conjecture}
\label{conj:clique-and-set-cover-no-fpt-approx} Both \mbox{Set Cover} and \mbox{Clique} do not admit an FPT algorithm with
approximation ratio $\rho$, for any function $\rho$.
\end{conjecture}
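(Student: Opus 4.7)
The full conjecture is almost certainly out of reach unconditionally (an unconditional proof would already imply $\FPT \neq$ W[1] and more), so the plan is to prove it under standard complexity hypotheses and then try to weaken the hypotheses over time. I would attack \mbox{Set Cover} and \mbox{Clique} separately, in each case starting from the best available NP-hardness-of-approximation result and engineering the reduction so that the parameter $k$ --- rather than the raw instance size $N$ --- governs the gap. Concretely, the goal would be to first establish the weaker quantitative statements highlighted in the abstract ($k^{F_1}$-inapproximability for \mbox{Set Cover} and $k^{1-\delta}$-inapproximability for \mbox{Clique}, both allowing $2^{k^{O(1)}}\cdot\mbox{poly}$ preprocessing), and then to push the ratio $\rho$ up toward an arbitrary function.

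For \mbox{Set Cover}, the natural starting point is the Projection Games Conjecture, which gives a polynomial-alphabet \mbox{Label Cover} instance with polynomial soundness gap. I would compose this with a partition-system reduction in the spirit of Lund--Yannakakis and Feige to obtain a \mbox{Set Cover} instance, taking care to make the YES optimum $k$ as small a function of the instance size as possible while keeping the NO optimum polynomial in $k$. Combined with the ETH lower bound on \mbox{Label Cover}, this should rule out FPT $k^{F_1}$-approximation running in $2^{k^{F_2}}\cdot \mbox{poly}(N,M)$ time for explicit constants $F_1, F_2$.

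For \mbox{Clique}, I would use an FGLSS-style reduction from an ETH-hard PCP and then amplify the gap with a graph product (tensor or lexicographic). Since the clique number multiplies under such products, one can push the inapproximability gap up to $k^{1-\delta}$ for any $\delta > 0$ by choosing the product degree carefully. Under $\NP \not\subseteq \SUBEXP$, any FPT algorithm with ratio $k^{1-\delta}$ running in $2^{k^F}\cdot \mbox{poly}(n)$ would unwind to a subexponential decision algorithm for the source SAT instance, a contradiction.

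The main obstacle in both cases is parameter bookkeeping. Classical hardness-of-approximation reductions are tuned to maximise the gap in terms of the instance size, not in terms of the optimum, and the blow-up in $k$ is typically large enough that $2^{k^F}\cdot \mbox{poly}(N)$ on the reduced instance translates to $2^{N^{O(1)}}$ on the source instance --- which is exactly what ETH \emph{allows}. Forcing $k$ to grow slowly (e.g.\ by restricting the number of parallel repetitions, or by choosing the product degree to match the PCP soundness) while still opening up a polynomial gap is the delicate step, and is what makes even the partial results genuinely new. Closing the full conjecture with an \emph{arbitrary} function $\rho$, rather than specifically $k^{F_1}$ or $k^{1-\delta}$, appears to require new PCP-style tools that are sensitive to the parameter and not just to $N$, and is the real open problem.
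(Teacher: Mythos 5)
You correctly recognize that the statement is an open conjecture that the paper does not prove outright, and your plan---conditional partial results via PGC composed with a Lund--Yannakakis/Feige partition-system reduction and ETH to bound the allowed running time for \mbox{Set Cover}, and a gap-preserving NP-hardness-of-approximation reduction for \mbox{Clique} under $\NP\not\subseteq\SUBEXP$---is precisely what the paper executes in Sections~\ref{sec:lower-bound-set-cover} and~\ref{sec:lower-bound-clique}, down to the observation that the delicate step is arranging for the parameter $k$ rather than the instance size $N$ to govern the gap. The one cosmetic difference is on the \mbox{Clique} side: rather than rebuilding an FGLSS-style reduction and amplifying with graph products from an ``ETH-hard PCP,'' the paper simply invokes Zuckerman's derandomization of H{\aa}stad's $n^{1-\epsilon}$-hardness as a black box, which already supplies the required gap with no ETH assumption needed for that half of the argument.
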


However to the best of our knowledge there has been no progress towards proving this conjecture, even under assumptions from
complexity theory. In this paper we take a first step towards proving
Conjecture~\ref{conj:clique-and-set-cover-no-fpt-approx}, under well-known and reasonable\footnote{It is very important to only work under well-believed assumptions, since otherwise we will be able to prove pretty much what we want, but it is of no value.} assumptions from complexity theory
like the Exponential Time Hypothesis (ETH) of Impagliazzo et al.~\cite{eth-paturi} and the Projection Games Conjecture (PGC)
of Moshkovitz~\cite{r3}.

For both minimization and
maximization problems, the most interesting and practical case is the input $(x,k)$ when $k=OPT(x)$. This motivates the
definition of the following variant of FPT approximation algorithms:

\begin{definition}
\label{defn-opt} Let $\rho : \mathbb{N} \rightarrow \mathbb{R}_{\geq 1}$ be a computable function such that $\rho(k)\geq 1$
for every $k\geq 1$; if \emph{goal}=\emph{min} then $k\cdot \rho(k)$ is nondecreasing and if \emph{goal}=\emph{max} then $k/\rho(k)$ is unbounded and
nondecreasing. An \textbf{FPT optimum approximation algorithm} for $O$ with approximation ratio $\rho$ is an algorithm
$\mathbb{A}'$ that, given an input $x\in \Sigma^{*}$ satisfying $\emph{sol}(x)\neq \emptyset$ outputs a $y\in \emph{sol}(x)$ such that
\begin{equation}\label{eqn:star'}
   \begin{cases}
   \emph{cost}(x,y)\leq OPT(x)\cdot \rho(OPT(x)) & \text{if}\ \emph{goal}=\emph{min} \\
   \emph{cost}(x,y)\geq OPT(x)/\rho(OPT(x)) & \text{if}\ \emph{goal}=\emph{max}
    \end{cases}
\end{equation}
We require that on input $x$ the algorithm $\mathbb{A}$ runs in $f(OPT(x))\cdot |x|^{O(1)}$ time, for some computable function
$f$.
\end{definition}

In Section~\ref{sec:thm-mini-implication}, we show the following theorem:
\begin{theorem}
\label{thm:mini-implication} Let $O$ be a minimization problem in \NP and $\mathbb{A}$ be an FPT approximation algorithm for $O$ with ratio $\rho$. On input $(x,k)$ let the running time of $\mathbb{A}$ be $f(k)\cdot |x|^{O(1)}$ for some non-decreasing computable function $f$. Then $O$ also has an FPT optimum approximation algorithm $\mathbb{A}'$ with approximation ratio $\rho$, and whose running time on input $x$ is also $f(OPT(x))\cdot |x|^{O(1)}$
\end{theorem}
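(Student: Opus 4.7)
The plan is to build $\mathbb{A}'$ from $\mathbb{A}$ by searching for the correct value of $k$. Since $\mathbb{A}'$ is given $x$ without $k$, the natural idea is to try $k=1,2,3,\ldots$ in turn until $\mathbb{A}$ succeeds. By Remark~\ref{normalized} we may assume $\mathbb{A}$ is already normalised, so on input $(x,k)$ it either returns a $y\in\text{sol}(x)$ satisfying $\text{cost}(x,y)\leq k\cdot\rho(k)$ or it outputs \texttt{reject}. Define $\mathbb{A}'(x)$ as: for $k=1,2,\ldots$, run $\mathbb{A}(x,k)$; return the first non-\texttt{reject} output.

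For correctness, let $k^{*}=OPT(x)$. When $k=k^{*}$ the minimisation version of condition~(*) holds, so $\mathbb{A}$ must return a feasible $y$ with $\text{cost}(x,y)\leq k^{*}\cdot\rho(k^{*})$. Hence the loop terminates at some $k_{0}\leq k^{*}$. If $k_{0}<k^{*}$, the normalised guarantee still gives $\text{cost}(x,y)\leq k_{0}\cdot\rho(k_{0})$, and since $k\mapsto k\cdot\rho(k)$ is nondecreasing this is at most $k^{*}\cdot\rho(k^{*})=OPT(x)\cdot\rho(OPT(x))$. Either way the output satisfies the bound demanded by Definition~\ref{defn-opt}.

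For the running time, the loop executes at most $k^{*}$ iterations and the cost of iteration $k$ is $f(k)\cdot|x|^{O(1)}$. Because $f$ is nondecreasing, the total time is at most $k^{*}\cdot f(k^{*})\cdot|x|^{O(1)}$. Since $O\in\NP$, feasible solutions have size $|y|=|x|^{O(1)}$ and the cost is computed in polynomial time, so in particular $OPT(x)=k^{*}\leq|x|^{O(1)}$; absorbing this polynomial factor into the $|x|^{O(1)}$ term yields a running time of $f(OPT(x))\cdot|x|^{O(1)}$, as required.

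The step that requires the most care is the last one: the argument that the linear sweep through $k$ does not blow up the running time beyond $f(OPT(x))\cdot|x|^{O(1)}$ uses crucially that $OPT(x)$ is polynomially bounded in $|x|$, which is where the NP hypothesis on $O$ enters. A doubling search $k=1,2,4,\ldots$ would avoid this assumption but only at the price of replacing $\rho(OPT(x))$ with $\rho(2\cdot OPT(x))$, so the monotone sweep together with the polynomial bound on $OPT(x)$ is what preserves the approximation ratio exactly.
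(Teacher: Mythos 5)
Your proof is correct and follows essentially the same approach as the paper: run $\mathbb{A}$ on $(x,1),(x,2),\ldots$ until the first success, use monotonicity of $k\mapsto k\cdot\rho(k)$ to compare to $OPT(x)\cdot\rho(OPT(x))$, and bound the running time by $OPT(x)\cdot f(OPT(x))\cdot|x|^{O(1)}$ where the $OPT(x)$ factor is absorbed into $|x|^{O(1)}$ because $O\in\NP$. Your invocation of the normalised form (Remark~\ref{normalized}) to detect termination and your explicit justification that $OPT(x)$ is polynomially bounded are slightly more careful than the paper's terse version but do not constitute a different route.
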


Hence for minimization problems, it is enough to prove hardness results only for the notion of FPT optimum approximation algorithms (see Definition~\ref{defn-opt}).
We do not know any relation between the two definitions for maximization problems, and hence we prove hardness results for both Definition~\ref{defn-1} and Definition~\ref{defn-opt}.

\section{Our Results} \label{sec:our-results}

We make the first progress towards proving Conjecture~\ref{conj:clique-and-set-cover-no-fpt-approx}, under
standard assumptions from complexity theory. In particular for \mbox{Set Cover} we assume the Exponential Time Hypothesis
(ETH)~\cite{eth-paturi} and the Projection Games Conjecture (PGC)~\cite{r3}\footnote{The PGC is stated in
Section~\ref{sec:pgc}, and the ETH is stated in Section~\ref{sec:eth}}. The PGC gives a reduction from SAT to Projection
Games. Composing this with the standard reduction from Projection Games to \mbox{Set Cover} gives a reduction from SAT to
\mbox{Set Cover}. Since the ETH gives a lower bound on the running time of SAT, we are able to show the following
inapproximability result in Section~\ref{sec:lower-bound-set-cover}:

\begin{theorem}
\label{setcover} Under the ETH and PGC,
\begin{enumerate}
\item There exist constants $F_1, F_2 >0$ such that the \mbox{Set Cover} problem does not admit an FPT \textbf{optimum}
    approximation algorithm with ratio $\rho(OPT)=OPT^{F_1}$ in $2^{OPT^{F_2}}\cdot \text{poly}(N,M)$ time, where $N$ is
    the size of the universe and $M$ is the number of sets.
\item There exist constants $F_3, F_4 >0$ such that the \mbox{Set Cover} problem does not admit an FPT approximation
    algorithm with ratio $\rho(k)=k^{F_3}$ in $2^{k^{F_4}}\cdot \text{poly}(N,M)$ time, where $N$ is the size of the
    universe and $M$ is the number of sets.
\end{enumerate}
\end{theorem}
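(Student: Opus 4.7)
I reduce Part~2 to Part~1 via Theorem~\ref{thm:mini-implication}: if Set Cover admitted an FPT approximation algorithm $\mathbb{A}$ with ratio $k^{F_3}$ in time $2^{k^{F_4}}\cdot\poly(N,M)$, then Theorem~\ref{thm:mini-implication} applied to $\mathbb{A}$ would produce an FPT optimum approximation with ratio $\OPT^{F_3}$ and time $2^{\OPT^{F_4}}\cdot\poly(N,M)$; taking $F_3 = F_1$ and $F_4 = F_2$, this is forbidden by Part~1. So it suffices to prove Part~1.

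For Part~1 the plan is to build, under PGC, a polynomial-time reduction from 3-SAT on $L$ variables to a Set Cover instance $(\mathcal{U},\mathcal{S})$ of size $\poly(L)$ such that YES instances admit a cover of size at most $\kappa = \poly(L)$ while NO instances require more than $\kappa^{1+F_1}$ sets. Given such a reduction, a hypothetical FPT optimum approximation algorithm with ratio $\OPT^{F_1}$ and time $2^{\OPT^{F_2}}\cdot\poly(N,M)$, run on the output instance, would output a cover of size at most $\kappa \cdot \kappa^{F_1} = \kappa^{1+F_1}$ in the YES case and could produce no such cover in the NO case; thus 3-SAT would be decided in total time $2^{\kappa^{F_2}}\cdot\poly(L)$, which, by choosing $\kappa = L^\alpha$ with $\alpha F_2 < 1$, becomes $2^{o(L)}$ and contradicts the ETH.

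The reduction itself composes two standard gap-preserving steps. \emph{Step~1:} apply PGC with soundness $\epsilon = L^{-\delta}$ (for a small constant $\delta>0$) to reduce 3-SAT on $L$ variables in polynomial time to a Projection Games instance on bipartite graph $(A \cup B, E)$ with total size $N = L \cdot (1/\epsilon)^{O(1)} = L^{1+O(\delta)}$, alphabet $|\Sigma| = L^{O(\delta)}$, and gap $(1,\epsilon)$. \emph{Step~2:} apply the standard Feige-style reduction from Projection Games to Set Cover via a partition system of parameter $h$, producing a Set Cover instance of universe size $|B|\cdot|\Sigma|^{O(h)}$ with YES OPT at most $\kappa := |A|$ and NO OPT at least $\kappa \cdot \Omega(h)$, provided the soundness condition $\epsilon\cdot|\Sigma|^{\Omega(h)} \le 1/2$ holds. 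Finally one chooses $\delta$, $h$, and, if needed, an amplification parameter $r$ so that the gap $h^{\Omega(r)}$ reaches $\kappa^{F_1}$ while the instance size stays polynomial in $L$.

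The main obstacle is the last tuning and amplification step: under PGC the alphabet grows polynomially with $1/\epsilon$, so $\log(1/\epsilon)/\log|\Sigma|$ is only a constant, and therefore the bare Feige partition system of Step~2 delivers only a constant-factor gap. A further amplification (a direct-product on Set Cover, or an iterated refinement of the partition system) is required to convert this constant gap into a polynomial-in-$\kappa$ gap. Balancing $\delta$, $h$, and the amplification factor against the requirement $\alpha F_2 < 1$ then fixes concrete constants $F_1, F_2 > 0$; the bookkeeping is delicate but routine once the framework is in place.
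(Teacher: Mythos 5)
Your reduction of Part~2 to Part~1 via Theorem~\ref{thm:mini-implication} matches the paper exactly, and your high-level ETH+PGC strategy for Part~1 (build a gap, feed it to the hypothetical optimum-approximation algorithm, distinguish YES/NO, contradict ETH) is also the paper's. But there is a concrete gap in your Part~1 plan that stems from a self-imposed and unnecessary constraint, and your proposed fix for it is not in the paper and is not known to work.

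You insist on a \emph{polynomial-time} reduction producing a Set Cover instance of size $\poly(L)$. The paper does not do this, and does not need to. The approximation algorithm being ruled out runs in time $2^{\OPT^{F_2}}\cdot\poly(N,M)$, so to contradict ETH it is enough for $N$ and $M$ to be \emph{subexponential} in $|I|$, not polynomial. Once you allow a subexponential blow-up, the constant-gap obstacle you flag disappears. Concretely, the paper composes PGC with the Lund--Yannakakis reduction based on $(m,\ell)$-set systems (Definition~\ref{defn:ml-set-systems}, Theorem~\ref{redn:proj-set-cover}), not the Feige-style partition system you describe. There the soundness threshold is $\epsilon=\frac{2}{\ell^2}$, so one sets $\ell=\Theta(1/\sqrt{\epsilon})$; choosing $\epsilon=|I|^{-c^*}$ with $c^*=\min\{c,2-2\delta\}$ gives gap $\Theta(1/\sqrt{\epsilon})=|I|^{\Theta(1)}$ directly, which already exceeds $\beta^{F_1}$ for a suitable constant $F_1$ since $\beta\le |I|^{2+\lambda c^*}$. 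The price is a universe of size $2^{O(\ell)}\cdot\poly=2^{O(|I|^{c^*/2})}\cdot\poly(|I|)$, which is subexponential because $c^*/2<1$; the paper then picks $F_2$ so that $2^{\OPT^{F_2}}\cdot\poly(N,M)$ stays subexponential, yielding the ETH contradiction.

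By contrast, your Step~2 imposes both the soundness condition $\epsilon\cdot|\Sigma|^{\Omega(h)}\le 1/2$ and a polynomial bound on $|\Sigma|^{O(h)}$; with $|\Sigma|=\poly(1/\epsilon)$ (which PGC forces on you) these two constraints cap $h$, and hence the gap, at a constant. You correctly diagnose this, but then wave at ``a direct-product on Set Cover, or an iterated refinement of the partition system'' as a fix. This is neither in the paper nor a standard tool: gap amplification for Set Cover via direct products is not known to preserve both the universe/set sizes and the gap ratio in the way you would need. The fix is not to amplify, but to drop the $\poly(L)$ constraint on the instance size and switch to the $(m,\ell)$-set system reduction, which is what the paper does.

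One smaller point: you write that $\kappa$ can be ``chosen'' so that $\alpha F_2<1$. In the paper $\kappa$ (the paper's $\beta$) is not a free parameter; it is determined by the reduction, $\beta\le|I|^{2+\lambda c^*}$, and you also need $\OPT\le M$ in the NO case. The paper therefore bounds $M\le|I|^{2+\lambda c^*+\alpha c^*}$ and chooses $F_2<\frac{1}{2+\lambda c^*+2\alpha c^*}$ so that the running time is subexponential \emph{regardless} of whether the instance is YES or NO. Your sketch handles only the YES case.
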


In Section~\ref{sec:lower-bound-clique}, we consider the \mbox{Clique} problem. We use the result of
Zuckerman~\cite{Zu} which states that it is \NP-hard to get an $O(n^{1-\epsilon})$-approximation for \mbox{Clique}. Given any
problem $X\in \NP$, by using the Zuckerman reduction from $X$ to \mbox{Clique} allows us to show the following result.

\begin{theorem}
\label{clique} Unless $\NP\subseteq \SUBEXP$, for every $1> \delta > 0$
\begin{enumerate}
\item There exists a constant $F(\delta)>0$ such that \mbox{Clique} has no \FPT\ \textbf{optimum} approximation with ratio
    $\rho(OPT)=OPT^{1-\delta}$ in $2^{OPT^{F}}\cdot \text{poly}(n)$ time, where $n$ is the number of vertices in the
    graph.
\item There exists a constant $F'(\delta)>0$ such that \mbox{Clique} has no \FPT\ approximation with ratio
    $\rho(k)=k^{1-\delta}$ in $2^{k^{F'}}\cdot \text{poly}(n)$ time, where $n$ is the number of vertices in the graph.
\end{enumerate}

\end{theorem}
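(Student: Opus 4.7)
The plan is to use Zuckerman's $\NP$-hardness of approximating \mbox{Clique} within $n^{1-\epsilon}$ as a gap-producing reduction from SAT, and then invoke the hypothetical FPT approximation algorithm to decide SAT in subexponential time, contradicting $\NP \not\subseteq \SUBEXP$. For every constant $\epsilon > 0$, Zuckerman's theorem supplies a polynomial-time reduction from 3-SAT on $n$ variables to a graph $G$ on $N = n^{c(\epsilon)}$ vertices (for some constant $c(\epsilon)$) such that YES instances have clique number $\geq N^{1-\epsilon}$ and NO instances have clique number $< N^{\epsilon}$. Fix $\delta \in (0,1)$; I would first choose $\epsilon = \epsilon(\delta)$ small enough that $(1-\epsilon)\delta > \epsilon$, for instance $\epsilon = \delta/(2(1+\delta))$. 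This is the critical gap-preservation step: under this choice, a clique of size $N^{(1-\epsilon)\delta}$ exists on YES instances but not on NO instances, so the Zuckerman gap survives the $k^{1-\delta}$ approximation factor.

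For Part 2, suppose for contradiction that a normalized FPT approximation algorithm $\mathbb{A}$ exists with ratio $k^{1-\delta}$ in time $2^{k^{F'}} \cdot \text{poly}(n)$. Run $\mathbb{A}$ on $(G, k = N^{1-\epsilon})$: on YES instances $\OPT(G) \geq k$, so $\mathbb{A}$ must return a clique of size $\geq k^{\delta} = N^{(1-\epsilon)\delta}$; on NO instances no clique of this size exists, hence by Remark~\ref{normalized} the normalized algorithm outputs \texttt{reject}. This decides SAT in time $2^{k^{F'}} \cdot \text{poly}(n) = 2^{n^{c(\epsilon)(1-\epsilon)F'}} \cdot \text{poly}(n)$. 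Part 1 is an even simpler variant: the FPT optimum approximation algorithm $\mathbb{A}'$ takes no guessed parameter, and applied directly to $G$ it returns a clique of size $\geq \OPT^{\delta} \geq N^{(1-\epsilon)\delta}$ on YES and a clique of size $< N^{\epsilon}$ on NO. Since $\OPT \leq N$ throughout, the running time is at most $2^{N^F} \cdot \text{poly}(n) = 2^{n^{c(\epsilon) F}} \cdot \text{poly}(n)$.

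Finally, I would set $F(\delta)$ and $F'(\delta)$ to be positive constants small enough (depending only on $\delta$, through $c(\epsilon(\delta))$) that the exponents $c(\epsilon) F$ and $c(\epsilon)(1-\epsilon) F'$ are strictly less than $1$. This places SAT, and hence every $\NP$ problem, in subexponential time, contradicting $\NP \not\subseteq \SUBEXP$. The main technical obstacle is controlling the interplay between $\delta$, the gap parameter $\epsilon$, and the polynomial blowup $c(\epsilon)$ of Zuckerman's reduction: because $c(\epsilon)$ grows as $\epsilon$ shrinks, the gap-preservation constraint $(1-\epsilon)\delta > \epsilon$ forces $\epsilon$ to be bounded below by a function of $\delta$, which in turn caps $c(\epsilon)$ and fixes the admissible range of $F(\delta), F'(\delta)$ as positive constants depending only on $\delta$.
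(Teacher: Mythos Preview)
Your proposal is correct and follows essentially the same strategy as the paper: use the H{\aa}stad--Zuckerman gap reduction from an arbitrary $\NP$ problem to \mbox{Clique}, run the hypothetical approximation on the resulting instance to distinguish YES from NO, and pick $F,F'$ small enough (as functions of the reduction's polynomial blowup $c(\epsilon)$) that the total running time is subexponential in the original input. The one difference worth flagging is in Part~2: you set $k=N^{1-\epsilon}$ and enforce $(1-\epsilon)\delta>\epsilon$, whereas the paper sets $k=n^{\epsilon}$ with $\epsilon=\delta/(\delta+1)$; your choice makes the soundness step transparently correct, while the paper's displayed equality $(n^{\epsilon})^{\delta}=n^{1-\epsilon}$ is actually a slip (it would need $\epsilon\delta=1-\epsilon$, not $\delta=\epsilon/(1-\epsilon)$), so your version is in fact the cleaner execution of the same idea.
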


\subsection{Polytime $f(OPT)$-approximation for W-hard problems}
\label{sec:intro-f-opt-poly-time}

We also deal with the following question: given that a problem is W-hard, can we maybe get a good polynomial-time
approximation for the problem? Any problem in $NP$ can be solved in $2^{n^{O(1)}}$ time by simply enumerating all candidates
for the witness. If the parameter $k$ is at least $\log n$, then we immediately have $2^{k}\geq n$ and the problem can be
solved in $2^{n^{O(1)}}\leq 2^{{2^k}^{O(1)}}$ time which is FPT time in $k$. So for large values of the parameter the brute
force algorithm itself becomes an FPT algorithm. Hence the intrinsic hardness to obtain FPT algorithms for intractable
problems is when the parameter $k$ is small (say at most $\log n$).
In this case, we show how to replace the impossible FPT solution by a good approximation, namely $f(OPT)$ approximation for some small
function $f$.
We systematically design polynomial-time $f(OPT)$
approximation algorithms for a number of W[1]-hard minimization problems such as {\mec}, Strongly Connected Steiner Subgraph, {\dsf} and
Directed Steiner Network. Each of the aformentioned problems is known to have strong inapproximability
(in terms of input size). Since we can assume $OPT$ is small, this implies $f(OPT)$ is small as well. Therefore for these
W[1]-hard problems, if the parameter is large then we can get an FPT algorithm, otherwise if the parameter is small (then OPT is small as well, otherwise we can reject for these minimization problems)and we obtain a reasonable approximation in polynomial time. These results point towards a separation between the classes W[1] and
W[2] since we do not know any W[2]-hard problem which has a polynomial-time $f(OPT)$-approximation, for any function $f$. In
fact, Marx (personal communication) conjectured that the W[2]-hard \mbox{Set Cover} problem does not have a polynomial-time
$f(OPT)$-approximation for any function $f$.

Finally in Section~\ref{sec:scss} we show that the well-studied W[1]-hard Strongly Connected Steiner Subgraph problem has an
FPT $2$-approximation algorithm. This answers a question by Marx~\cite{daniel-survey} regarding finding a problem which is
fixed-parameter intractable, does not have a constant factor approximation in polynomial time but admits a constant factor
FPT approximation. To the best of our knowledge no such W[2]-hard problem (parameterized by solution size) is known,
and this indicates another potential difference between W[1] and W[2].

\subsection{Proof of Theorem~\ref{thm:mini-implication}}
\label{sec:thm-mini-implication}

Let $x\in \Sigma^*$ be the input for $\mathbb{A}'$. The algorithm $\mathbb{A}'$ runs the algorithm $\mathbb{A}$ on the instances $(x,1), (x,2), \ldots$ until the first $k$ such that the output of $\mathbb{A}$ on $(x,k)$ is a solution of cost at most $k\cdot \rho(k)$. Then $\mathbb{A}'$ outputs $\mathbb{A}(x,k)$. By Definition~\ref{defn-1}, we know that $k\leq OPT(x)$. Hence $k\cdot \rho(k)\leq OPT(x)\cdot \rho(OPT(x))$. It remains to analyze the running time of $\mathbb{A}'$.

Since $k\leq OPT(x)$, the running time of $\mathbb{A}'$ is upper bounded by $\sum_{i=1}^{k} f(i)\cdot |x|^{O(1)}\leq \sum_{i=1}^{OPT(x)} f(i)\cdot |x|^{O(1)}= \Big(\sum_{i=1}^{OPT(x)} f(i)\Big)\cdot |x|^{O(1)} \leq OPT(x)\cdot f(OPT(x))\cdot |x|^{O(1)} = f(OPT(x))\cdot |x|^{O(1)}$, since $f$ is non-decreasing and $OPT(x)\leq |x|$.
\qed

\section{Conjectures from Computational Complexity}

In this section, we describe two conjectures from computational complexity that we work with in this paper.

\subsection{Exponential Time Hypothesis}
\label{sec:eth}

Impagliazzo, Paturi and Zane~\cite{eth-paturi} formulated the following conjecture which is known as the Exponential Time
Hypothesis (ETH).

\begin{center}
\noindent\framebox{\begin{minipage}{6.00in}
\textbf{Exponential Time Hypothesis} (ETH)\\
$3$-$SAT$ cannot be solved in $2^{o(n)}$ time where $n$ is the number of variables.
\end{minipage}}
\end{center}

Using the Sparsification Lemma of Calabro, Impagliazzo and Paturi~\cite{sparsification-paturi}, the following lemma was shown
in~\cite{eth-paturi}.

\begin{lemma}
\label{lem:sparsification-eth} Assuming ETH, $3$-$SAT$ cannot be solved in $2^{o(m)}$ time where $m$ is the number of clauses.
\end{lemma}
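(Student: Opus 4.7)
The plan is to apply the Sparsification Lemma of Calabro, Impagliazzo and Paturi, which converts an arbitrary $3$-$SAT$ formula on $n$ variables into a disjunction of ``sparse'' instances, each having at most $c(\epsilon)\cdot n$ clauses, at a cost of producing $2^{\epsilon n}$ such instances (for any chosen $\epsilon > 0$). Combining this with a hypothetical $2^{o(m)}$-time algorithm for $3$-$SAT$, I would derive a $2^{o(n)}$-time algorithm for $3$-$SAT$, contradicting ETH.

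More concretely, suppose for contradiction that there is an algorithm $\mathbb{B}$ solving $3$-$SAT$ in time $2^{o(m)}$; that is, for every $\delta > 0$ there exists $m_0$ such that $\mathbb{B}$ runs in time at most $2^{\delta m}$ on instances with $m \geq m_0$ clauses. I would design an algorithm for general $3$-$SAT$ as follows: on input $\phi$ with $n$ variables, fix parameters $\epsilon, \delta > 0$ (to be chosen later), invoke the Sparsification Lemma to produce in $O(2^{\epsilon n})$ time a list $\phi_1, \ldots, \phi_t$ with $t \leq 2^{\epsilon n}$, where each $\phi_i$ has at most $c(\epsilon)\cdot n$ clauses and $\phi$ is satisfiable iff some $\phi_i$ is. I would then run $\mathbb{B}$ on each $\phi_i$ and output ``satisfiable'' iff at least one call accepts.

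The running time per sparse instance is at most $2^{\delta\, c(\epsilon)\, n}$, so the total running time is bounded by $2^{\epsilon n}\cdot 2^{\delta c(\epsilon) n} = 2^{(\epsilon + \delta c(\epsilon))\, n}$. The key step is to choose the parameters in the correct order: given any target $\epsilon' > 0$, first pick $\epsilon = \epsilon'/2$, which fixes $c(\epsilon)$ as a constant depending only on $\epsilon'$; then pick $\delta > 0$ small enough that $\delta\, c(\epsilon) \leq \epsilon'/2$. This yields a $2^{\epsilon' n}$-time algorithm for $3$-$SAT$ for every $\epsilon' > 0$, hence a $2^{o(n)}$-time algorithm, contradicting ETH.

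The main (and essentially only) subtlety is bookkeeping with the quantifiers: one must fix $\epsilon$ before $\delta$, because $c(\epsilon)$ may blow up as $\epsilon \to 0$; once $\epsilon$ is fixed, $c(\epsilon)$ is a constant and $\delta$ can be chosen sufficiently small. Beyond that, the argument is mechanical and relies entirely on the Sparsification Lemma as a black box.
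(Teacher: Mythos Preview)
Your proposal is correct and follows exactly the approach the paper indicates: the paper does not actually spell out a proof of this lemma but simply states that it was shown in~\cite{eth-paturi} using the Sparsification Lemma of Calabro, Impagliazzo and Paturi~\cite{sparsification-paturi}. Your write-up is precisely the standard derivation via the Sparsification Lemma that these references contain, including the correct handling of the quantifier order (fixing $\epsilon$ before choosing $\delta$ so that $c(\epsilon)$ is a constant).
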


In the reductions from $3$-$SAT$ to Clique, Vertex Cover and Independent Set, the number of vertices formed in the graphs is
equal to the number of clauses in the $3$-$SAT$ instance and hence Lemma~\ref{lem:sparsification-eth} gives  evidence against
subexponential algorithms for the above three problems. This is enough to give some belief in the ETH. We note that ETH and
its variants have been used to prove lower bounds in both FPT~\cite{fpt-eth} and exact exponential
algorithms~\cite{exact-eth}. We refer to~\cite{survey-eth} for a nice survey on lower bounds using ETH. In this paper, we use
ETH to give inapproximability results for Set Cover.\footnote{It is not clear what Moshkovitz~\cite{r3} refers to as the size of a SAT instance. If it the number of variables, then we use the ETH as is. Otherwise if it refers to the number of clauses, then we are still fine by the Sparsification Lemma~\cite{sparsification-paturi}}

\subsection{The Projection Games Conjecture}
\label{sec:pgc}

First we define a \emph{projection game}. Note that with a loss of factor two we can assume that the alphabet is the same for
both sides. The input to a projection game consists of:
\begin{itemize}
\item A bipartite graph $G = (V_1,V_2,E)$
\item  A finite alphabets $\Sigma$
\item Constraints (also called projections) given by $\pi_{e}:\Sigma \rightarrow \Sigma$ for every $e\in E$.
\end{itemize}
The goal is to find an assignment $\phi: V_1\cup V_2\rightarrow \Sigma$ that \emph{satisfies} as many of the edges as
possible. We say that an edge $e = \{a,b\}\in E$ is satisfied, if the projection constraint holds, i.e., $\pi_{e}(\phi(a)) =
\phi(b)$. We denote the size of a projection game by $n = |V_1| + |V_2| + |E|$.

\begin{conjecture}
\label{conj:pgc} (\emph{Projection Games Conjecture}~\cite{r3}) There exists $c > 0$ such that for every $\epsilon>0$, there
is a polynomial reduction \texttt{RED-1} from SAT\footnote{SAT is the standard Boolean satisfiability problem} to Projection Games which maps an instance $I$ of SAT to an instance $I_1$
of Projection Games such that:
\begin{enumerate}
\item If a YES instance $I$ of SAT satisfies $|I|^c\geq \frac{1}{\epsilon}$, then all edges of $I_1$ can be satisfied.
\item If a NO instance $I$ of SAT satisfies $|I|^c\geq \frac{1}{\epsilon}$, then at most $\epsilon$-fraction of the edges
    of $I_1$ can be satisfied.
\item The size of $I_1$ is almost-linear in the size of $I$, and is given by $|I_1|=n= |I|^{1+o(1)}\cdot
    \emph{poly}(\frac{1}{\epsilon})$.
\item The alphabet $\Sigma$ for $I_1$ has size $\emph{poly}(\frac{1}{\epsilon})$.
\end{enumerate}
\end{conjecture}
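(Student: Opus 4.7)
The plan is to start from a PCP for SAT that already has the structure of a projection game and then amplify its soundness while keeping the size blowup almost-linear. The natural starting point is the two-query projection PCP of Moshkovitz and Raz, which reduces a SAT instance $I$ to a projection game of size $|I|^{1+o(1)}$ over an alphabet of size $\text{poly}(1/\epsilon_0)$ with soundness error $\epsilon_0$ for some fixed constant $\epsilon_0 < 1$. From here the goal is to drive the soundness from this constant $\epsilon_0$ down to an arbitrary $\epsilon > 0$, paying only an additional $\text{poly}(1/\epsilon)$ factor in size and alphabet, and preserving the projection-game structure.

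First I would set $k = \Theta(\log(1/\epsilon)/\log(1/\epsilon_0))$ and in principle apply $k$-fold parallel repetition, invoking Raz's parallel repetition theorem (or Holenstein's simpler argument) to bring the soundness below $\epsilon$. Vanilla parallel repetition, however, turns an $n$-edge game into an $n^k$-edge game, which is fatal: for $k = \Theta(\log(1/\epsilon))$ this gives size $n \cdot (1/\epsilon)^{\Theta(\log n)}$, far worse than the almost-linear target. The key technical step I would therefore carry out is to replace independent repetition by a \emph{derandomized} variant: rather than sampling $k$ independent copies of an edge, draw the $k$ coordinates from a pseudorandom distribution on $E^k$ (for instance, a random walk on an expander whose vertex set is $E$, or a correlated distribution of Reingold--Vadhan--Wigderson type) whose support has size $n \cdot \text{poly}(1/\epsilon)$ but which still satisfies a Raz-style soundness decay $\epsilon_0^{\Omega(k)}$ specifically for projection games.

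The output of this construction is a game whose question sets live in $V_1^k$ and $V_2^k$, whose alphabet is $\Sigma_0^k$ of size $\text{poly}(1/\epsilon)$, and whose edge set has size $n \cdot \text{poly}(1/\epsilon) = |I|^{1+o(1)} \cdot \text{poly}(1/\epsilon)$. One then has to verify that the resulting game really is a projection game: since the parallel composition of projections is coordinate-wise a projection, this holds for standard repetition and extends to the derandomized variant provided the correlated distribution respects the edge-wise projection structure. Combining this analysis with the initial Moshkovitz--Raz PCP would yield a reduction meeting all four conditions of Conjecture~\ref{conj:pgc}: completeness is preserved automatically, soundness is below $\epsilon$, total size is $|I|^{1+o(1)} \cdot \text{poly}(1/\epsilon)$, and the alphabet has size $\text{poly}(1/\epsilon)$.

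The main obstacle is unquestionably the derandomized parallel repetition step: all currently known derandomized repetition theorems (Impagliazzo--Kabanets--Wigderson, Dinur--Meir, and Moshkovitz's own work) fall short of the combination required here, either losing soundness much faster than Raz-style, or requiring stronger structural hypotheses on the base game, or incurring size blowups larger than $\text{poly}(1/\epsilon)$. In effect, proving PGC amounts to establishing an optimally-efficient derandomized parallel repetition theorem for projection games, which is why the statement is currently only a conjecture. Thus my ``proposal'' is best viewed as a reduction of PGC to this specific open question in PCP theory rather than an unconditional proof; I would not expect to resolve PGC without a genuine new idea on derandomized repetition.
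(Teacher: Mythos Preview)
The statement you are addressing is Conjecture~\ref{conj:pgc}, and the paper does not prove it. It is stated purely as a hypothesis (attributed to Moshkovitz~\cite{r3}) and then \emph{assumed} in order to derive the downstream inapproximability results for \mbox{Set Cover} in Theorem~\ref{setcover}. There is therefore no ``paper's own proof'' to compare your proposal against.

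Your write-up is not a proof either, and to your credit you say so explicitly: you reduce the conjecture to an optimally-efficient derandomized parallel repetition theorem for projection games and then note that no such theorem is known. That is an accurate summary of the state of the art, but it means the proposal does not establish the statement. In the context of this paper the correct thing to write is simply that PGC is taken as an unproven assumption, on a par with ETH; any attempt to ``prove'' it here is out of scope.
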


A weaker version of the conjecture is
known, but the difference is that the alphabet in~\cite{dana} has size $\text{exp}(\frac{1}{\epsilon})$. As pointed out
in~\cite{r3}, the Projection Games Conjecture is an instantiation of the Sliding Scale Conjecture of Bellare et al.~\cite{c1}
from 1993. Thus, in fact this conjecture is actually 20 years old. But we have reached a state of knowledge now that it seems
likely that the Projection Games Conjecture will be proved not long from now (see Section 1.2 of~\cite{r3}). Thus it seems that posing this
conjecture is quite reasonable. In contrast to this is the Unique Games Conjecture \cite{ugc}. On the positive side, it seems
that the Unique Games Conjecture is much more influential than the Projection Games Conjecture. But it seems unlikely (to us)
that the Unique Games Conjecture will be resolved in the near future.

\section{An FPT Inapproximability Result for \mbox{Set Cover}}
\label{sec:lower-bound-set-cover}

The goal of this section is to prove Theorem~\ref{setcover}.

\subsection{Reduction from Projection Games to Set Cover}
The following reduction from Projection Games to Set Cover is known, see~\cite{dorit,DBLP:journals/jacm/LundY94}. We sketch a
proof below for completeness.

\begin{theorem}
\label{redn:proj-set-cover} There is a reduction \texttt{RED-2} from Projection Games to \mbox{Set Cover} which maps an
instance $I_1=(G=(V_1,V_2,E), \Sigma, \pi)$ of Projection Games to an instance $I_2$ of \mbox{Set Cover} such that:
\begin{enumerate}
\item If all edges of $I_1$ can be satisfied, then $I_2$ has a set cover of size $|V_1|+|V_2|$.
\item If at most $\epsilon$-fraction of edges of $I_1$ can be satisfied, then the size of a minimum set cover for $I_2$ is
    at least $\frac{|V_1|+|V_2|}{\sqrt{32\epsilon}}$
\item The instance $I_2$ has $|\Sigma|\times (|V_1|+|V_2|)$ sets and the size of the universe is
    $2^{O(\frac{1}{\sqrt{\epsilon}})} \times |\Sigma|^{2}\times |E|$
\item The time taken for the reduction is upper bounded by $2^{O(\frac{1}{\sqrt{\epsilon}})} \times \emph{poly}(|\Sigma|)\times \emph{poly}(|E|+|V_1|+|V_2|)$
\end{enumerate}
\end{theorem}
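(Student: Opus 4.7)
The plan is to follow the classical Lund--Yannakakis/Feige style reduction, which replaces each edge of the projection game by a ``partition system'' gadget. Fix a parameter $\ell = \Theta(1/\sqrt{\epsilon})$ to be chosen later. For each edge $e = \{u,v\}$ with $u \in V_1$, $v \in V_2$, I will build a ground set $U_e$ of size $2^{O(\ell)} \cdot |\Sigma|^2$ together with two families of subsets $\{A^e_\sigma : \sigma \in \Sigma\}$ and $\{B^e_\tau : \tau \in \Sigma\}$ satisfying two properties. First, \emph{consistent pairs cover}: if $\pi_e(\sigma) = \tau$, then $A^e_\sigma \cup B^e_\tau = U_e$. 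Second, \emph{small inconsistent collections miss a point}: for every $L_1, L_2 \subseteq \Sigma$ containing no consistent pair under $\pi_e$ and with $|L_1| + |L_2| < \ell$, the union $\bigcup_{\sigma \in L_1} A^e_\sigma \;\cup\; \bigcup_{\tau \in L_2} B^e_\tau$ is a strict subset of $U_e$. Such a gadget is standard and can be produced in $2^{O(\ell)} \cdot \poly(|\Sigma|)$ time, for instance by a random construction followed by derandomization.

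The Set Cover instance $I_2$ then has universe $U = \bigsqcup_{e \in E} U_e$, and for each pair $(w,\sigma) \in (V_1 \cup V_2) \times \Sigma$ one set $S_{w,\sigma} = \bigcup_{e \ni w} X^e_\sigma$, where $X = A$ if $w \in V_1$ and $X = B$ otherwise. This immediately produces $|\Sigma|(|V_1| + |V_2|)$ sets, a universe of size $|E| \cdot 2^{O(1/\sqrt{\epsilon})} \cdot |\Sigma|^2$, and total construction time $2^{O(1/\sqrt{\epsilon})} \cdot \poly(|\Sigma|) \cdot \poly(|E| + |V_1| + |V_2|)$, matching items (3) and (4) of the theorem. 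Completeness (item 1) is then immediate: a satisfying assignment $\phi$ of $I_1$ induces the cover $\{S_{w,\phi(w)} : w \in V_1 \cup V_2\}$ of size $|V_1| + |V_2|$, since for every edge $e = uv$ the pair $(\phi(u),\phi(v))$ is consistent and so covers $U_e$ by the first gadget property.

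The main obstacle is the soundness statement (item 2). Assume $I_1$ is a NO instance and suppose, for contradiction, that $I_2$ admits a cover $\mathcal{C}$ of size strictly less than $N/\sqrt{32\epsilon}$, where $N = |V_1| + |V_2|$. Setting $L(w) = \{\sigma : S_{w,\sigma} \in \mathcal{C}\}$ gives $\sum_w |L(w)| = |\mathcal{C}|$. Call an edge $e = uv$ \emph{heavy} if $|L(u)| + |L(v)| \geq \ell$; by the second gadget property, every non-heavy edge must admit a consistent pair in $L(u) \times L(v)$. I will bound the number of heavy edges via a double-counting/Markov argument against the budget $\sum_w |L(w)| < N/\sqrt{32\epsilon}$, and then produce a randomized assignment by drawing $\phi(w)$ uniformly from $L(w)$ (or a default label when $L(w) = \emptyset$). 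Tuning $\ell = \Theta(1/\sqrt{\epsilon})$ so that the fraction of heavy edges and the expected success probability $1/(|L(u)|\,|L(v)|)$ on non-heavy edges combine correctly yields an expected satisfied fraction exceeding $\epsilon$, contradicting the NO-case guarantee of $I_1$. The delicate part is pinning down $\ell$ and the universal constants to extract precisely the $\sqrt{32\epsilon}$ gap; once the gadget is in hand everything else is routine bookkeeping.
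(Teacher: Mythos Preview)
Your proposal is correct and follows essentially the same Lund--Yannakakis/Feige approach as the paper: the paper instantiates your abstract per-edge gadget concretely via a single global $(m,\ell)$-set system $(B,\{C_1,\dots,C_m\})$ with $m=|\Sigma|$ and $\ell=\sqrt{2/\epsilon}$, encoding the projection by taking $A^e_\sigma=\overline{C_{\pi_e(\sigma)}}$ and $B^e_\tau=C_\tau$. The soundness argument is also the same (label sets $L(w)$, Markov on $\sum_w|L(w)|$, random assignment on light edges), with the one caveat that the paper explicitly assumes the bipartite graph is regular so that the vertex-level Markov bound transfers to a bound on the fraction of ``heavy'' edges---you will need this assumption as well for your double-counting step to go through.
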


\subsubsection{Proof of Theorem \ref{redn:proj-set-cover}}
\label{1}

\begin{definition}
\label{defn:ml-set-systems} An $(m, \ell)$-set system consists of a universe $B$ and collection of subsets
$\{C_1,\ldots,C_m\}$ such that if the union of any sub-collection of $\ell$-sets from the collection
$\{C_1,\ldots,C_m,\overline{C_1},\ldots,\overline{C_m}\}$ is $B$, then the collection must contain both $C_i$ and
$\overline{C_i}$ for some $i\in [m]$.
\end{definition}

It is known that an $(m, \ell)$-set system with a universe size $|B| = O(2^{2\ell}m^2)$ exists, and can be constructed in
$2^{O(\ell)}\cdot m^{O(1)}$ time~\cite{dorit}. Consider the following reduction:

\begin{center}
\noindent\framebox{\begin{minipage}{6.00in}
\mbox{\textbf{Projection Games Instance}}:\\
$(G=(V_1,V_2,E), \Sigma, \pi)$ such that $|\Sigma|=m$. \\
~\\
\mbox{\textbf{Set Cover Instance}}:\\
Let $B$ be a $(m, \ell)$ set system. The universe for the set cover instance consists of $E\times B$. Define the following
subsets of $E\times B$
\begin{itemize}
\item For all vertices $v\in V_2$ and $x\in \Sigma$, define the subset $S_{v,x}= \bigcup_{e\ni v} \{e\} \times C_x$
\item For all vertices $u\in V_1$ and $y\in \Sigma$, define the subset $S_{u,y}= \bigcup_{e\ni u} \{e\} \times
    \overline{C_{\pi_{e}(y)}}$
\end{itemize}
The Set Cover instance produced is $(E\times B, \{S_{w,x}\ |\ w\in V_1\cup V_2, x\in \Sigma\}) $
\end{minipage}}
\end{center}

The following theorem is shown in~\cite{reduction-lecture-notes}. We give a proof below for the sake of completeness.
\begin{theorem}
\label{lem:uw} If all edges of $G$ can be satisfied then the instance of \mbox{Set Cover} constructed has a set cover of
size $|V_1|+|V_2|$. On the other hand if at most $\frac{2}{\ell^2}$-fraction of edges of $G$ can be satisfied then the minimum
size of set cover for the \mbox{Set Cover} instance constructed above is $\frac{\ell}{8}(|V_1|+|V_2|)$.
\end{theorem}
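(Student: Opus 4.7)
The argument splits cleanly into a completeness direction and a soundness direction.

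\emph{Completeness.} Given a satisfying assignment $\phi$ for $I_1$, I would take $\mathcal{C} = \{S_{w,\phi(w)} : w \in V_1 \cup V_2\}$, which has exactly $|V_1|+|V_2|$ members. For each $(e,b) \in E \times B$ with $e=\{u,v\}$, satisfaction of $e$ forces $\pi_e(\phi(u))=\phi(v)$; hence $b$ lies either in $C_{\phi(v)}$ (so $(e,b) \in S_{v,\phi(v)}$) or in $\overline{C_{\phi(v)}} = \overline{C_{\pi_e(\phi(u))}}$ (so $(e,b) \in S_{u,\phi(u)}$). This handles the first half of the statement.

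\emph{Soundness.} For the second half I would argue by contrapositive: given a set cover $\mathcal{S}'$ of size $K < \frac{\ell}{8}(|V_1|+|V_2|)$, I will produce an assignment $\phi$ satisfying strictly more than a $\frac{2}{\ell^2}$-fraction of the edges. Write $L_w = \{x \in \Sigma : S_{w,x} \in \mathcal{S}'\}$ and $s_w = |L_w|$, so $\sum_w s_w = K$, and sample $\phi(w)$ uniformly at random from $L_w$ (arbitrarily if $L_w = \emptyset$). The key step is to translate the covering property of $\mathcal{S}'$ through the $(m,\ell)$-set system: on any edge $e=\{u,v\}$ the sets $\{C_x : x \in L_v\} \cup \{\overline{C_{\pi_e(y)}} : y \in L_u\}$ cover $B$ using at most $s_u + s_v$ members of $\{C_i\}\cup\{\overline{C_i}\}$, so whenever $s_u + s_v \leq \ell$, Definition~\ref{defn:ml-set-systems} forces some index $i$ with both $C_i$ and $\overline{C_i}$ in the collection, i.e., $L_v \cap \pi_e(L_u) \neq \emptyset$. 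For such an edge a satisfying pair $(y,x)$ with $y \in L_u$ and $x = \pi_e(y) \in L_v$ exists, and the random $\phi$ picks it with probability at least $\frac{1}{s_u s_v} \geq \frac{4}{(s_u+s_v)^2} \geq \frac{4}{\ell^2}$ by AM-GM.

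To conclude, I would split $E = E_{\mathrm{sm}} \sqcup E_{\mathrm{big}}$ according to whether $s_u + s_v \leq \ell$ or $s_u + s_v > \ell$, so that the expected number of satisfied edges is at least $\frac{4|E_{\mathrm{sm}}|}{\ell^2}$. Since $e \in E_{\mathrm{big}}$ implies $\frac{s_u+s_v}{\ell} > 1$, we get $|E_{\mathrm{big}}| \leq \frac{1}{\ell}\sum_e (s_u+s_v) = \frac{1}{\ell}\sum_w s_w \cdot \deg_G(w)$. Assuming the projection game is $d$-regular on each side (the standard output of the PGC construction, and achievable WLOG), this is at most $\frac{dK}{\ell} < \frac{dN}{8} = \frac{|E|}{4}$, where $N=|V_1|+|V_2|$; hence $|E_{\mathrm{sm}}| \geq \frac{3|E|}{4}$ and the expected number of satisfied edges is at least $\frac{3|E|}{\ell^2} > \frac{2|E|}{\ell^2}$. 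Some realization of $\phi$ then achieves this, contradicting the hypothesis.

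\emph{Main obstacle.} The crucial step is the invocation of the $(m,\ell)$-set system property: converting global set cover membership into the local intersection $L_v \cap \pi_e(L_u) \neq \emptyset$ on each ``small'' edge. The bookkeeping subtlety I would watch is the regularity assumption, which is needed to bound $|E_{\mathrm{big}}|$; without some control on degrees (say, a cover concentrating on a high-degree vertex) the random assignment strategy can break, so this structural assumption on $G$ must be invoked explicitly.
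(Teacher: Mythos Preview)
Your proposal is correct and follows essentially the same approach as the paper: the completeness argument is identical, and the soundness argument is the same contrapositive via random labels from $L_w$ together with the $(m,\ell)$-set-system property on ``small'' edges, differing only in bookkeeping (the paper defines frugally covered edges by $|L_u|,|L_v|<\ell/2$ and applies Markov to vertices, whereas you use $s_u+s_v\le\ell$ and bound $|E_{\mathrm{big}}|$ directly by an edge sum). Both arguments rely on the same regularity assumption for the projection game, which you correctly flag.
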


Assuming Theorem~\ref{lem:uw}, we obtain Theorem~\ref{redn:proj-set-cover} by setting $\epsilon=\frac{2}{\ell^2}$ in
Theorem~\ref{lem:uw}. Recall that $m= |\Sigma|$. Hence the size of the universe is $|E\times B| = |E|\times |B| = |E|\times
2^{O(\frac{1}{\sqrt{\epsilon}})}\times |\Sigma|^2$ and the number of sets is $|\Sigma|\times (|V_1|+|V_2|)$.

We prove Theorem~\ref{lem:uw} via the following two lemmas:

\begin{lemma}
\label{lem:completeness} If all the edges of $G$ can be satisfied then the instance of Set Cover has a set cover of size
$|V_1| + |V_2|$
\end{lemma}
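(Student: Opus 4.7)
The plan is to use the satisfying assignment of $G$ directly to define a set cover of the required size, and then verify coverage edge-by-edge using the complementarity built into the $(m,\ell)$-set system.

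First, let $\phi: V_1 \cup V_2 \to \Sigma$ be an assignment that satisfies every edge of $G$, i.e.\ $\pi_e(\phi(u)) = \phi(v)$ for every edge $e = \{u,v\}$ with $u \in V_1, v \in V_2$. I will select the collection
\[
\mathcal{T} = \{S_{u, \phi(u)} : u \in V_1\} \cup \{S_{v, \phi(v)} : v \in V_2\},
\]
which clearly has size $|V_1|+|V_2|$. The entire task then reduces to showing that $\mathcal{T}$ covers the universe $E \times B$.

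The key step is a local argument: fix an arbitrary element $(e, b) \in E \times B$, where $e = \{u, v\}$ with $u \in V_1, v \in V_2$. I will analyze the two candidate sets in $\mathcal{T}$ that involve this edge, namely $S_{u, \phi(u)}$ and $S_{v, \phi(v)}$. By the construction in the reduction, these restricted to the slice $\{e\} \times B$ contribute $\{e\} \times \overline{C_{\pi_e(\phi(u))}}$ and $\{e\} \times C_{\phi(v)}$ respectively. Using the fact that $\phi$ satisfies the edge $e$, i.e.\ $\pi_e(\phi(u)) = \phi(v)$, the first contribution becomes $\{e\} \times \overline{C_{\phi(v)}}$. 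Hence the union of the two contributions is $\{e\} \times \bigl(C_{\phi(v)} \cup \overline{C_{\phi(v)}}\bigr) = \{e\} \times B$, which contains $(e, b)$.

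Since $(e,b)$ was arbitrary, this shows $\mathcal{T}$ covers $E \times B$, and $|\mathcal{T}| = |V_1|+|V_2|$ as required. There is no genuine obstacle here; the only subtle point is to invoke the projection constraint $\pi_e(\phi(u)) = \phi(v)$ at exactly the right moment so that the set $C_{\phi(v)}$ chosen on the $V_2$-side matches up with the complementary set $\overline{C_{\phi(v)}}$ chosen on the $V_1$-side. The $(m,\ell)$-set system properties are not needed at all for this direction; they are only relevant for the soundness lemma that follows.
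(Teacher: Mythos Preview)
Your proof is correct and follows essentially the same approach as the paper: pick one set $S_{w,\phi(w)}$ per vertex $w$ according to a satisfying assignment $\phi$, then verify coverage on each slice $\{e\}\times B$ by combining $\{e\}\times C_{\phi(v)}$ from the $V_2$-side with $\{e\}\times \overline{C_{\pi_e(\phi(u))}} = \{e\}\times \overline{C_{\phi(v)}}$ from the $V_1$-side. Your remark that the $(m,\ell)$-system property is irrelevant here and only enters in the soundness direction is also accurate.
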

\begin{proof}
Let $\delta: V_1 \cup V_2 \rightarrow \Sigma$ denote a labeling for $G$ that satisfies all the edges $E$. Pick the following
set of sets $\mathcal{S} = {S_{w,\delta(w)}\ |\ w\in V_1 \cup V_2}$. The number of sets in $\mathcal{S}$ is $|V_1| + |V_2|$.
We claim that $\mathcal{S}$ is a valid set cover for $E\times B$. For every edge $e = (u, v0$ we show the following holds
\begin{equation}
\{e\}\times B \subseteq S_{u,\delta(u)}\cup S_{v,\delta(v)}
\label{eqn:1}
\end{equation}
The definition of $S_{u,\delta(u)}$ and $S_{v,\delta(v)}$ implies
\begin{equation}
\{e\}\times C_{\delta(v)} \subseteq S_{v,\delta(v)}\ \text{and}\ \{e\}\times \overline{C_{\Pi_{e}(\delta(u))}} \subseteq S_{u,\delta(u)}
\label{eqn:2}
\end{equation}
Since $\delta$ satisfies all the edges (and hence also satisfies $e$), we have $\Pi_{e}(\delta(u))=\delta(v)$. Therefore
\begin{equation}
\label{eqn:3}
\{e\}\times \overline{C_{\delta(v)}} = \{e\}\times \overline{C_{\Pi_{e}(\delta(u))}} \subseteq S_{u,\delta(u)}
\end{equation}
Now we can see that Equation~\ref{eqn:2} and Equation~\ref{eqn:3} imply Equation~\ref{eqn:1}. Moreover, taking the union of
the containment relation implied by Equation~\ref{eqn:1} for all edges $e$, we get $E\times B \subseteq \bigcup_{u\in V_1\cup
V_2} S_{u,\delta(u)}$ which completes the proof.
\end{proof}

\begin{lemma}
\label{lem:soundness} If at most $\frac{2}{\ell^2}$-fraction of edges of $G$ can be satisfied then the minimum size of set
cover for the \mbox{Set Cover} instance is $\frac{\ell}{8}(|V_1|+|V_2|)$.
\end{lemma}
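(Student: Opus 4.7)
The plan is to argue by contrapositive: assume we are given a set cover $\mathcal{T}$ with $|\mathcal{T}| < \frac{\ell}{8}(|V_1|+|V_2|)$, and exhibit a labeling $\phi : V_1 \cup V_2 \to \Sigma$ that satisfies more than $\frac{2}{\ell^2}$-fraction of the edges, contradicting the hypothesis. For each vertex $w$, define $L(w) = \{x \in \Sigma : S_{w,x} \in \mathcal{T}\}$, so that $\sum_w |L(w)| = |\mathcal{T}|$. The labeling will be obtained probabilistically: independently choose $\phi(w)$ uniformly at random from $L(w)$ (and arbitrarily if $L(w)=\emptyset$, though we will see this case contributes only to ``bad'' edges).

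The key combinatorial step is to classify each edge $e=(u,v) \in E$ as \emph{good} if $|L(u)|+|L(v)|<\ell$ and as \emph{bad} otherwise, and to show that every good edge is satisfied with reasonable probability. Since $\mathcal{T}$ covers $\{e\}\times B$, and the only sets of $\mathcal{T}$ that intersect $\{e\}\times B$ are $S_{v,x}$ (contributing $\{e\}\times C_x$) for $x\in L(v)$ and $S_{u,y}$ (contributing $\{e\}\times \overline{C_{\pi_e(y)}}$) for $y\in L(u)$, we see that fewer than $\ell$ sets from $\{C_i,\overline{C_i}\}_i$ cover $B$. By the defining property of the $(m,\ell)$-set system (Definition~\ref{defn:ml-set-systems}), such a cover must include a complementary pair, i.e.\ there exist $x\in L(v)$ and $y\in L(u)$ with $\pi_e(y)=x$. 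Thus $e$ is satisfied with probability at least $\frac{1}{|L(u)|\cdot|L(v)|}\ge \frac{4}{\ell^2}$, where the last inequality is AM-GM applied to $|L(u)|+|L(v)|<\ell$.

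It remains to show that the good edges form a large fraction of $E$. From the definition of a bad edge, $\ell \cdot \#\textrm{bad} \le \sum_{(u,v)\in E}(|L(u)|+|L(v)|) = \sum_{w} |L(w)|\deg(w)$. Assuming $G$ is $d$-regular (which we may take WLOG for Projection Games arising from PGC, using standard bi-regularization), this last sum equals $d\cdot|\mathcal{T}| < d\cdot \ell(|V_1|+|V_2|)/8 = \ell|E|/4$, so $\#\textrm{bad} < |E|/4$ and $\#\textrm{good} > 3|E|/4$. Consequently, the expected number of edges satisfied by $\phi$ is at least $\tfrac{3|E|}{4}\cdot \tfrac{4}{\ell^2} = \tfrac{3|E|}{\ell^2} > \tfrac{2|E|}{\ell^2}$, and by the probabilistic method some deterministic labeling achieves this, completing the contrapositive.

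The main obstacle I expect is properly aligning the $(m,\ell)$-set-system property with the counting to land on the exact constant $\ell/8$, and in particular handling degenerate cases: if $L(u)=\emptyset$ for some vertex, then covering $\{e\}\times B$ requires $\bigcup_{x\in L(v)} C_x = B$ with no complementary pair available, which by the set-system property forces $|L(v)|\ge\ell$, so every such edge is automatically bad and never enters the good-edge sum. A secondary delicate point is the regularity assumption on $G$; if the conjecture is invoked in a form that only gives bounded (rather than uniform) degree, the constant $1/8$ would need to be adjusted by the max-to-average-degree ratio, but the argument is otherwise unchanged.
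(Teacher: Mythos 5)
Your proof is correct and follows the same overall strategy as the paper's: argue by contrapositive, define the label sets $L(w)$ from a hypothesized small set cover, sample a random labeling, and invoke the $(m,\ell)$-set-system property to show that a ``good'' edge $e=(u,v)$ has a complementary pair $C_i,\overline{C_i}$ among the at most $\ell$ restricted sets, hence is satisfied with probability at least $\frac{1}{|L(u)||L(v)|}\ge \frac{4}{\ell^2}$. Where you diverge is the counting step that controls how many edges are good. The paper thresholds at the \emph{vertex} level (a Markov-type argument shows at most a quarter of the vertices have $|L_w|>\ell/2$, and regularity then forces at least half the edges to have both endpoints small), while you classify \emph{edges} directly by $|L(u)|+|L(v)|<\ell$, bound the bad-edge count by summing $|L(u)|+|L(v)|$ over edges, and use AM--GM for the probability. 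Your variant is marginally cleaner, gives the slightly stronger constant $\ge 3/\ell^2$ rather than exactly $2/\ell^2$, and you explicitly handle the degenerate case $L(u)=\emptyset$ (showing it forces $|L(v)|\ge\ell$ and hence the edge is already bad), a point the paper leaves implicit. Both arguments depend on the same bi-regularity assumption on the projection game graph, which the paper likewise invokes without detailed justification.
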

\begin{proof}
We prove the contrapositive. Suppose there is a set cover $\mathcal{S}$ with $|\mathcal{S}| < \frac{\ell}{8}(|V_1| + |V_2|)$.
Then for each vertex $w$ define the set of labels
$$ L_{w}  = \{ c\in \Sigma\ |\ S_{w,c}\in \mathcal{S}\} $$
This implies that $|\mathcal{S}|=\sum_{w\in V_1\cup V_2} |L_w|$. Hence the average cardinality of $L_w$ satisfies
$$ \dfrac{\sum_{w\in V_1\cup V_2}}{|V_1|+|V_2|} = \dfrac{|\mathcal{S}|}{|V_1|+|V_2|} \leq \dfrac{\ell}{8}  $$
If there are more than $\frac{\ell}{4}$ vertices such that $|L_w|>\frac{\ell}{2}$, then the total sum $\sum_{w} |L_w|$ would
be greater than $\frac{\ell}{8}$, which is a contradiction. Hence at least $\frac{3}{4}$ of the vertices $w\in V_1\cup V_2$
satisfy $|L_w|\leq \frac{1}{2}$. Since we can assume that the bipartite graph of the Projection Games instance is regular, we
have that at least half the edges have both endpoints $(u,v)$ such that $|L_u|<\frac{\ell}{2}$ and $|L_v|<\frac{\ell}{2}$.
\begin{definition}
We say that an edge $e=(u,v)$ is \emph{frugally covered} if $|L_u|<\frac{\ell}{2}$ and $|L_v|<\frac{\ell}{2}$.
\end{definition}
Consider the following labeling $\delta'$ for $G$: for every $w\in V_1\cup V_2$ choose an element from $L_w$ uniformly at
random. We now show that the expected fraction of edges covered by $\delta'$ is at least $\frac{2}{\ell^2}$, which will
complete the proof.

To show this, we obtain that each \emph{frugally covered} edge is satisfied by $\delta'$ with probability at least
$\frac{4}{\ell^2}$. Since there are at least $\frac{|E|}{2}$ \emph{frugally covered} edges, we are done. It remains to show
that any frugally covered edge is satisfied by $\delta'$ with probability at least $\frac{4}{\ell^2}$. Let $e=(u,v)$ be any
frugally covered edge. Let $L_{u}=\{a_1, a_2, \ldots, a_p\}$ and $L_v = \{b_1, b_2, \ldots, b_q\}$. Since $e$ is frugally
covered we have $\frac{\ell}{2}>\max\{p,q\}$. The sets in $\mathcal{S}$ completely cover $E \times B$, and hence they also
cover $e\times  B$. Note, that for any vertex $w\notin \{u,v\}$ we have $|S_{w,x} \cap \{e × B\}| = 0$ for all $x\in \Sigma$.
In other words, no element of the set $e\times B$ can be covered by any of the sets $S_{w,x}$ for any vertex $w \notin \{u,
v\}$. Therefore the set $e\times B$ is covered by the sets chosen for vertices $u$ and $v$. That is,
$$ \{e\}\times B \subseteq (\bigcup_{i=1}^{p} S_{u,a_i}) \cup (\bigcup_{j=1}^{q} S_{v,b_j}) $$
By definition of $S_{v,x}$, we have $S_{v,x}\cap (\{e\}\times B) = \{e\}\times C_x$. Similarly $S_{u,y}\cap (\{e\}\times B) =
\{e\}\times \overline{C}_{\Pi_{e}(y)}$. Restricting the sets $S_{u, a_i}$ and $S_{v, b_j}$ to $\{e\}\times B$ in the above
containment we get
$$ \{e\}\times B \subseteq \Big( \{e\}\times \bigcup_{i=1}^{p} \overline{C}_{\Pi(a_i)} \Big) \cup \Big( \{e\}\times \bigcup_{j=1}^{q} C_{b_j} \Big) $$
Therefore, we have
$$  B\subseteq \Big( \bigcup_{i=1}^{p} \overline{C}_{\Pi(a_i)} \Big) \cup \Big( \bigcup_{j=1}^{q} C_{b_j} \Big) $$
This means that $B$ is covered by a family of $p+q\leq \ell$ sets, all of which are either of the form $C_i$ or
$\overline{C}_i$. Since $(B, C_i)$ form a $(m,\ell)$ set system, there exists an index $i$ such that both $C_i$ and
$\overline{C_i}$ are present among the $p+q$ sets. This implies for some $a_i, b_j$ we have $\Pi_{e}(a_i)=b_j$. Since we
choose the labels uniformly at random, with probability $\frac{1}{pq}$ we choose both $\delta'(u)=a_i$ and $\delta'(v)=b_j$.
Thus the probability that $e$ is satisfied by $\delta'$ is at least $\frac{1}{pq}\geq (\frac{2}{\ell})^2 = \frac{4}{\ell^2}$.
\end{proof}

\subsection{Composing the Two Reductions:}

Composing the reductions from Conjecture~\ref{conj:pgc} and Theorem~\ref{redn:proj-set-cover} we get:

\begin{theorem}
\label{thm:sat-set-cover} There exists $c>0$, such that for every $\epsilon>0$ there is a reduction \texttt{RED-3} from SAT to
\mbox{Set Cover} which maps an instance $I$ of SAT to an instance $I_2$ of \mbox{Set Cover} such that
\begin{enumerate}
\item If a YES instance $I$ of SAT satisfies $|I|^c\geq \frac{1}{\epsilon}$, then $I_2$ has a set cover of size $\beta$.
\item If a NO instance $I$ of SAT satisfies $|I|^c\geq \frac{1}{\epsilon}$, then $I_2$ does not have a set cover of size
    less than $\frac{\beta}{\sqrt{32\epsilon}}$.
\item The size $N$ of the universe for the instance $I'$ is $2^{O(\frac{1}{\sqrt{\epsilon}})} \times
    \emph{poly}(\frac{1}{\epsilon})\times \emph{poly}(|I|)$.
\item The number $M$ of sets for the set cover instance $I'$ is $\emph{poly}(\frac{1}{\epsilon})\times \emph{poly}(|I|)$.
\item The total time required for \texttt{RED-3} is $\text{emph}(|I|) + 2^{O(\frac{1}{\sqrt{\epsilon}})} \times
    \emph{poly}(\frac{1}{\epsilon})\times \emph{poly}(|I|)$.
\end{enumerate}
where $\beta\leq |I_1|=|I|^{1+o(1)}\cdot \emph{poly}(\frac{1}{\epsilon})$. Note that the number of elements is very large compared to the number of sets.
\end{theorem}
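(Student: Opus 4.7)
The plan is to prove Theorem~\ref{thm:sat-set-cover} by a direct composition of the two reductions \texttt{RED-1} (from Conjecture~\ref{conj:pgc}) and \texttt{RED-2} (from Theorem~\ref{redn:proj-set-cover}). Given an instance $I$ of SAT and a target error parameter $\epsilon > 0$, first I would apply \texttt{RED-1} to $I$ to obtain a Projection Games instance $I_1 = (G=(V_1, V_2, E), \Sigma, \pi)$ with $|I_1| = |I|^{1+o(1)}\cdot \text{poly}(1/\epsilon)$ and $|\Sigma| = \text{poly}(1/\epsilon)$. Then I would feed $I_1$ into \texttt{RED-2} to produce the Set Cover instance $I_2$. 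Define \texttt{RED-3} to be this composition, and set $\beta := |V_1| + |V_2|$, which satisfies $\beta \leq |I_1| = |I|^{1+o(1)}\cdot \text{poly}(1/\epsilon)$.

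For the completeness case, if $I$ is a YES-instance of SAT and $|I|^c \geq 1/\epsilon$, then by item (1) of Conjecture~\ref{conj:pgc} all edges of $I_1$ are satisfiable, and then by item (1) of Theorem~\ref{redn:proj-set-cover} the instance $I_2$ has a set cover of size $\beta = |V_1|+|V_2|$. For soundness, if $I$ is a NO-instance with $|I|^c \geq 1/\epsilon$, then at most an $\epsilon$-fraction of edges of $I_1$ can be satisfied, so item (2) of Theorem~\ref{redn:proj-set-cover} forces every set cover of $I_2$ to have size at least $\beta/\sqrt{32\epsilon}$. This gives items (1) and (2) of the statement.

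The bookkeeping for items (3)--(5) is where care is needed, but each bound follows by substitution. For the universe size, item (3) of Theorem~\ref{redn:proj-set-cover} gives
\begin{equation*}
N = 2^{O(1/\sqrt{\epsilon})}\cdot |\Sigma|^2 \cdot |E| = 2^{O(1/\sqrt{\epsilon})}\cdot \text{poly}(1/\epsilon)\cdot |I_1| = 2^{O(1/\sqrt{\epsilon})}\cdot \text{poly}(1/\epsilon)\cdot \text{poly}(|I|),
\end{equation*}
using $|\Sigma| = \text{poly}(1/\epsilon)$ and $|E| \leq |I_1|$. For the number of sets,
\begin{equation*}
M = |\Sigma|\cdot(|V_1|+|V_2|) \leq \text{poly}(1/\epsilon)\cdot \text{poly}(|I|).
\end{equation*}
The total running time is the time for \texttt{RED-1}, which is $\text{poly}(|I|)$, plus the time for \texttt{RED-2}, which is $2^{O(1/\sqrt{\epsilon})}\cdot \text{poly}(|\Sigma|)\cdot \text{poly}(|I_1|) = 2^{O(1/\sqrt{\epsilon})}\cdot \text{poly}(1/\epsilon)\cdot \text{poly}(|I|)$.

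The only potential pitfall is making sure the $|\Sigma|^2$ and $\text{poly}(|\Sigma|)$ factors from \texttt{RED-2} collapse correctly under the PGC's polynomial alphabet guarantee; once that is observed, the result is essentially automatic. There is no real combinatorial obstacle here beyond the transparent composition, so the proof should be short and mostly a careful recording of parameters.
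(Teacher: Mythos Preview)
Your proposal is correct and follows essentially the same approach as the paper: define \texttt{RED-3} as the composition of \texttt{RED-1} and \texttt{RED-2}, set $\beta = |V_1|+|V_2|$, and then verify items (1)--(5) by direct substitution of the PGC parameters $|\Sigma| = \text{poly}(1/\epsilon)$ and $|I_1| = |I|^{1+o(1)}\cdot \text{poly}(1/\epsilon)$ into the bounds of Theorem~\ref{redn:proj-set-cover}. The paper's proof is the same parameter-by-parameter bookkeeping you outline, with no additional ideas.
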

\begin{proof}
We apply the reduction from Theorem~\ref{redn:proj-set-cover} with $|\Sigma| = \text{poly}(\frac{1}{\epsilon})$ and
$|V_1|+|V_2|+|E|= n = |I|^{1+{o(1)}}\cdot \text{poly}(\frac{1}{\epsilon})$. Substituting these values in
Conjecture~\ref{conj:pgc} and Theorem~\ref{redn:proj-set-cover}, we get the parameters as described in the given theorem. We
work out each of the values below:
\begin{enumerate}
\item If $I$ is a YES instance of SAT satisfying $\epsilon\geq \frac{1}{|I|^c}$, then \texttt{RED-1} maps it to an
    instance $I_1=(G=(V_1,V_2,E), \Sigma, \pi)$ of Projection Games such that all edges of $I_1$ can be satisfied. Then
    \texttt{RED-2} maps $I_1$ to an instance $I_2$ of Set Cover such that $I_2$ has a set cover of size $\beta=|V_1|+|V_2|
    \leq |V_1|+|V_2|+|E| = |I_1| = |I|^{1+o(1)}\cdot \text{poly}(\frac{1}{\epsilon})$.
\item If $I$ is a NO instance of SAT satisfying $\epsilon\geq \frac{1}{|I|^c}$, then \texttt{RED-1} maps it to an instance
    $I_1=(G=(V_1,V_2,E), \Sigma, \pi)$ of Projection Games such that at most $\epsilon$-fraction of the edges of $I_1$ can
    be satisfied. Then \texttt{RED-2} maps $I_1$ to an instance $I_2$ of Set Cover such that $I_2$ does not have a set
    cover of size $\frac{\beta}{\sqrt{32 \epsilon}}$, where $\beta$ is as calculated above.
\item By Theorem~\ref{redn:proj-set-cover}(3), the size of the universe is $2^{O(\frac{1}{\sqrt{\epsilon}})} \times
    |\Sigma|^{2}\times |E|$. Observing that $|\Sigma|= \text{poly}(\frac{1}{\epsilon})$ and $|E|\leq |I_1|=
    |I|^{1+{o(1)}}\cdot \text{poly}(\frac{1}{\epsilon})$, it follows that the size of the universe is
    $2^{O(\frac{1}{\sqrt{\epsilon}})} \times \text{poly}(\frac{1}{\epsilon})\times \text{poly}(|I|)$.
\item By Theorem~\ref{redn:proj-set-cover}(3), the number of sets is $|\Sigma|\times (|V_1|+|V_2|)$. Observing that
    $|\Sigma| = \text{poly}(\frac{1}{\epsilon})$ and $|V_1|+|V_2|\leq |I_1|= |I|^{1+{o(1)}}\cdot
    \text{poly}(\frac{1}{\epsilon})$, it follows that the number of sets is $\text{poly}(\frac{1}{\epsilon})\times
    \text{poly}(|I|)$.
\item Since \texttt{RED-3} is the composition of \texttt{RED-1} and \texttt{RED-2}, the time required for \texttt{RED-3}
    is the summation of the times required for \texttt{RED-1} and \texttt{RED-2}. By Conjecture~\ref{conj:pgc}, the time
    required for \texttt{RED-1} is $\text{poly}(|I|)$. By Theorem~\ref{redn:proj-set-cover}(4), the time required for
    \texttt{RED-2} is at most $2^{O(\frac{1}{\sqrt{\epsilon}})} \times \text{poly}(|\Sigma|)\times
    \text{poly}(|E|+|V_1|+|V_2|)$. Observing that $|\Sigma|=\text{poly}(\frac{1}{\epsilon})$ and $|V_1|+|V_2|+|E| = |I_1|
    = |I|^{1+o(1)}\cdot \text{poly}(\frac{1}{\epsilon})$, it follows that the time required for \texttt{RED-2} is at most
    $2^{O(\frac{1}{\sqrt{\epsilon}})} \times \text{poly}(\frac{1}{\epsilon})\times \text{poly}(|I|)$. Adding up the two,
    the time required for \texttt{RED-3} is at most $\text{poly}(|I|) + 2^{O(\frac{1}{\sqrt{\epsilon}})} \times
    \text{poly}(\frac{1}{\epsilon})\times \text{poly}(|I|)$.
\end{enumerate}
\end{proof}

Finally we are ready to prove Theorem~\ref{setcover}.

\subsection{Proof of Theorem~\ref{setcover}(1)}

The roadmap of the proof is as follows: suppose to the contrary there exists an FPT \textbf{optimum} approximation algorithm,
say $\mathbb{A}$, for Set Cover with ratio $\rho(OPT)=OPT^{F_1}$ in $2^{OPT^{F_2}}\cdot \text{poly}(N,M)$ time, where $N$ is
the size of the universe and $M$ is the number of sets (recall Definition~\ref{defn-opt}). We will choose the constant $F_1$
such that using \texttt{RED-3} from Theorem~\ref{thm:sat-set-cover} (which assumes PGC), the algorithm $\mathbb{A}$ applied to
the instance $I_2$ will be able to decide the instance $I_1$ of SAT. Then to violate ETH we will choose the constant $F_2$
such that the running time of $\mathbb{A}$ summed up with the time required for \texttt{RED-3} is subexponential in $|I|$.

Let $c>0$ be the constant from Conjecture~\ref{conj:pgc}. Fix some constant $1>\delta>0$ and let $c^{*}=\min\{c,2-2\delta\}$.
Note that $\frac{c^{*}}{2}\leq 1-\delta$. Choosing $\epsilon = \frac{1}{|I|^{c^{*}}}$ implies $\epsilon\geq \frac{1}{|I|^c}$,
since $c\geq c^*$. We carry out the reduction \texttt{RED-3} given by Theorem~\ref{thm:sat-set-cover}. From
Conjecture~\ref{conj:pgc}(3), we know that $|I_1| = |I|^{1+o(1)}\cdot \text{poly}(\frac{1}{\epsilon})$. Let $\lambda>0$ be a
constant such that the $\text{poly}(\frac{1}{\epsilon})$ is upper bounded by $(\frac{1}{\epsilon})^{\lambda}$. Then
Theorem~\ref{thm:sat-set-cover} implies $\beta \leq |I|^{1+o(1)}\cdot (\frac{1}{\epsilon})^{\lambda}$. However we have chosen
$\epsilon = \frac{1}{|I|^{c^*}}$, and hence asymptotically we get
\begin{equation}
\beta \leq |I|^{2}\cdot |I|^{\lambda c^{*}} = |I|^{2+\lambda c^{*}}
\label{eqn:half}
\end{equation}
Choose the constant $F_1$ such that $\dfrac{c^*}{4(2+\lambda c^*)}\geq F_1$. Suppose Set Cover has an FPT optimum
approximation algorithm $\mathbb{A}$ with ratio $\rho(OPT)=OPT^{F_1}$ (recall Definition~\ref{defn-opt}). We show that this
algorithm $\mathbb{A}$ can decide the SAT problem. Consider an instance $I$ of SAT, and let $I_2=$\texttt{RED-3}$(I)$ be the
corresponding instance of Set Cover. Run the FPT approximation algorithm on $I_G$, and let $\mathbb{A}(I_2)$ denote the output
of $\textsc{ALG}$. We have the following two cases:
\begin{itemize}
\item $\frac{\beta}{\sqrt{32 \epsilon}}\leq \mathbb{A}(I_2)$: Then we claim that $I$ is a NO instance of SAT. Suppose to
    the contrary that $I$ is a YES instance of SAT. Then Theorem~\ref{thm:sat-set-cover}(1) implies $\beta \geq OPT(I_2)$.
    Hence $\frac{\beta}{\sqrt{32 \epsilon}}\leq \mathbb{A}(I_2)\leq OPT\cdot \rho(OPT) = OPT^{1+F_{1}} = \beta^{1+F_{1}}
    \Rightarrow \frac{1}{\sqrt{32 \epsilon}}\leq \beta^{F_1}$. However, asymptotically we have $\frac{1}{\sqrt{32
    \epsilon}} = \frac{|I|^{\frac{c^*}{2}}}{\sqrt{32}} > |I|^{\frac{c^*}{4}} \geq (|I|^{2+\lambda c^*})^{F_1} =
    \beta^{F_1}$, where the last two inequalities follows from the choice of $F_1$ and Equation~\ref{eqn:half}
    respectively. This leads to a contradiction, and therefore $I$ is a NO instance of SAT.
\item $\frac{\beta}{\sqrt{32 \epsilon}}> \mathbb{A}(I_2)$: Then we claim that $I$ is a YES instance of SAT. Suppose to the
    contrary that $I$ is a NO instance of SAT. Then Theorem~\ref{thm:sat-set-cover}(2) implies $OPT(I_2)\geq
    \frac{\beta}{\sqrt{32 \epsilon}}$. Therefore we have $\frac{\beta}{\sqrt{32 \epsilon}}> \mathbb{A}(I_2) \geq
    OPT(I_2)\geq \frac{\beta}{\sqrt{32 \epsilon}}$.
\end{itemize}
Therefore we run the algorithm $\mathbb{A}$ on the instance $I_2$ and compare the output $\frac{\beta}{\sqrt{32 \epsilon}}$
with $n^{\epsilon}$. As seen above, this comparison allows us to decide the SAT problem.

We now choose the constant $F_2$ such that the running time of $\mathbb{A}$ summed up with the time required for
\texttt{RED-3} is subexponential in $|I|$.

By Theorem~\ref{thm:sat-set-cover}(5), the total time taken by \texttt{RED-3} is $\text{poly}(|I|) +
2^{O(\frac{1}{\sqrt{\epsilon}})} \times \text{poly}(\frac{1}{\epsilon})\times \text{poly}(|I|) = \text{poly}(|I|) +
2^{O(|I|^{\frac{c^*}{2}})} \times \text{poly}(|I|^{\frac{c^{*}}{2}})\times \text{poly}(|I|) = \text{poly}(|I|) + 2^{o(I)}\cdot
\text{poly}(|I|)$ since $\frac{c^{*}}{2}\leq 1-\delta$. Hence total time taken by \texttt{RED-3} is subexponential in $I$. We
now show that there exists a constant $F_2$ such that the claimed running time of $2^{OPT^{F_2}}\cdot \text{poly}(N,M)$ for
the algorithm $\mathbb{A}$ is subexponential in $|I|$, thus contradicting ETH. We do not have to worry about the
$\text{poly}(N,M)$ factor: the reduction time is subexponential in $|I|$, and hence $\max\{N,M\}$ is also upper bounded by a
subexponential function of $|I|$. Hence, we essentially want to choose a constant $F_2>0$ such that $OPT^{F_2}\leq
M^{F_2}=o(|I|)$. From Theorem~\ref{thm:sat-set-cover}(4), we know that $M\leq |\Sigma|\times |V_1 + V_2|$. Since
$|\Sigma|=\text{poly}(\frac{1}{\epsilon})$, let $\alpha>0$ be a constant such that the $|\Sigma|\leq
(\frac{1}{\epsilon})^{\alpha}$. We have seen earlier in the proof that $|V_1 + V_2|\leq |I_1|\leq |I|^2 \cdot
(\frac{1}{\epsilon})^{\lambda} = |I|^{2+c^* \lambda}$. Therefore $M^{F_2}\leq (|I|^{2+c^* \lambda + c^* \alpha})^{F_2}$.
Choosing $F_2< \frac{1}{2+\lambda c^* + 2\alpha c^*}$ gives $\OPT^{F_2} = o(|I|)$, which is what we wanted to show.

\qed

\subsection{Proof of Theorem~\ref{setcover}(2)}

Observe that due to Theorem~\ref{thm:mini-implication}, Theorem~\ref{setcover}(1) implies Theorem~\ref{setcover}(2).

\section{An FPT Inapproximability Result for \mbox{Clique}}
\label{sec:lower-bound-clique}

We use the following theorem due to Zuckerman~\cite{Zu}, which in turn is a derandomization of a result of H{\.a}stad~\cite{hast} .

\begin{theorem}{\em~\cite{hast,Zu}}
Let $X$ be any problem in $\NP$. For any constant $\epsilon>0$ there exists a polynomial time reduction from $X$ to
\mbox{Clique} so that the gap between the clique sizes corresponding to the YES and NO instances of $X$ is at least
$n^{1-\epsilon}$, where $n$ is the number of vertices of the Clique instances.
\end{theorem}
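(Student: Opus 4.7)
The plan is to combine a strong PCP characterization of $\NP$ with the FGLSS reduction, and then derandomize the amplification step using an explicit disperser. First I would invoke a PCP theorem giving, for every problem $X\in\NP$, a verifier that uses $r$ random bits and a constant number of queries, has perfect completeness, and has some fixed soundness error $s<1$ (say $s=1/2$). On input a YES instance the verifier accepts with probability $1$ over a proof; on a NO instance it accepts with probability at most $s$ on every proof.

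Next I would apply the FGLSS reduction to produce a graph $G$: the vertices are pairs (random string $\rho$, partial assignment $\pi$ to the queried positions) for which the verifier accepts, and two vertices are adjacent iff their partial assignments are \emph{consistent} (no position is set both ways). The standard analysis then shows that $G$ has maximum clique size exactly equal to the maximum, over proofs $\Pi$, of the acceptance probability times $2^{r}$. Hence YES instances yield a clique of size $2^{r}$, while NO instances admit cliques of size at most $s\cdot 2^{r}$. With $r=O(\log|X|)$ queries the graph has $n=2^{O(\log|X|)}=\poly(|X|)$ vertices, so we obtain a constant factor inapproximability for Clique.

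To push the gap to $n^{1-\epsilon}$, I would amplify the soundness. Naively, running the PCP verifier $t$ times independently and accepting only if all runs accept drives the soundness down to $s^t$; choosing $t$ so that $s^t \le 2^{-(1-\epsilon)r t}$ produces the desired $n^{1-\epsilon}$ gap. The catch is that naive independent repetition multiplies the randomness by $t$, which blows $n$ up proportionally and destroys the $n^{1-\epsilon}$ bound. The fix, and the heart of Zuckerman's contribution, is to choose the $t$ random strings for the repetitions not independently but from the outputs of an explicit \emph{disperser}: using only $(1+o(1))r$ total random bits one can produce $t$ strings with the property that every NO proof must fail on at least one of them except with negligible probability. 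With an appropriate explicit disperser construction, the number of vertices remains $n=\poly(|X|)$ while the soundness collapses enough to yield the $n^{1-\epsilon}$ gap, and the entire reduction runs in polynomial time.

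The main obstacle is the explicit disperser step: Håstad's original randomized argument uses random functions to show such an amplifier exists, which only yields a randomized polynomial-time reduction. Turning this into a deterministic polynomial-time reduction requires an explicit disperser whose parameters (seed length, entropy loss, error) are simultaneously strong enough to keep the graph size polynomial and drive the soundness below $n^{-(1-\epsilon)}$. I would therefore plug in Zuckerman's explicit disperser construction as a black box at this point, verify that its parameters match what the FGLSS-plus-amplification analysis demands, and then read off the claimed gap of $n^{1-\epsilon}$ in polynomial time.
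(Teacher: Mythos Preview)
The paper does not prove this theorem at all: it is quoted verbatim as a known result from H\aa stad~\cite{hast} and Zuckerman~\cite{Zu} and used as a black box in the subsequent proof of Theorem~\ref{clique}. So there is no ``paper's own proof'' to compare against.

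That said, your sketch is a faithful high-level summary of the actual H\aa stad--Zuckerman argument: PCP characterization of $\NP$, the FGLSS graph construction turning verifier acceptance probability into clique size, soundness amplification by repetition, and Zuckerman's replacement of independent repetition by a randomness-efficient disperser to keep the graph polynomial in size while driving the gap to $n^{1-\epsilon}$. If you wanted to actually carry this out you would need to be careful that the disperser parameters (seed length versus output length versus error) line up, and in particular that the number of FGLSS vertices stays $2^{(1+o(1))r}$ so that the amplified soundness $2^{-(1-\epsilon)r}$ really translates into an $n^{1-\epsilon}$ gap in the final graph; your outline acknowledges this but does not verify it. For the purposes of the present paper, however, none of this is needed: simply citing \cite{hast,Zu} suffices.
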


\subsection{Proof of Theorem~\ref{clique}(1)}

Fix a constant $1>\delta>0$. Set $0<\epsilon = \frac{\delta}{\delta+2}$, or equivalently $\delta =
\frac{2\epsilon}{1-\epsilon}$. Let $X$ be any problem in $\NP$. Let the Hastad-Zuckerman reduction from $X$ to Clique
\cite{hast,Zu} which creates a gap of at least $n^{1-\epsilon}$ map an instance $I$ of $X$ to the corresponding instance $I_G$
of Clique. Since the reduction is polynomial, we know that $n=|I_G|=|I|^D$ for some constant $D(\epsilon)>0$. Note that $D$
depends on $\epsilon$, which in turn depends on $\delta$. Hence, ultimately $D$ depends on $\delta$. If $I$ is a YES instance
of $X$, then $I_G$ contains a clique of size at least $n^{1-\epsilon}$ since each graph has a trivial clique of size one and
the gap between YES and NO instances of Clique is at least $n^{1-\epsilon}$. Similarly, observe that a graph on $n$ vertices
can have a clique of size at most $n$. To maintain the gap of at least $n^{1-\epsilon}$, it follows if $I$ is a NO instance of
$X$ then the maximum size of a clique in $I_G$ is at most $n^{\epsilon}$. To summarize, we have
\begin{itemize}
\item If $I$ is a YES instance, then $OPT(I_G)\geq n^{1-\epsilon}$
\item If $I$ is a NO instance, then $OPT(I_G)\leq n^{\epsilon}$
\end{itemize}

Suppose Clique has an FPT \textbf{optimum} approximation algorithm $\mathbb{A}$ with ratio $\rho(OPT)=OPT^{1-\delta}$ (recall
Definition~\ref{defn-opt}). We show that this algorithm $\mathbb{A}$ can decide the problem $X$. Consider an instance $I$ of
$X$, and let $I_G$ be the corresponding instance of Clique. Run the FPT approximation algorithm on $I_G$, and let
$\mathbb{A}(I_G)$ denote the output of $\mathbb{A}$. We have the following two cases:
\begin{itemize}
\item \underline{$n^{\epsilon}\geq \mathbb{A}(I_G)$}: Then we claim that $I$ is a NO instance of $X$. Suppose to the
    contrary that $I$ is a YES instance of $X$, then we have $n^{\epsilon}\geq \mathbb{A}(I_G)\geq
    \frac{OPT_{I_G}}{\rho(OPT(I_G)))} = (OPT(I_G))^{\delta} \geq (n^{1-\epsilon})^{\delta} = n^{2\epsilon}$, which is a
    contradiction.
\item \underline{$n^{\epsilon}<\mathbb{A}(I_G)$}: Then we claim that $I$ is a YES instance of $X$. Suppose to the contrary
    that $I$ is a NO instance of $X$, then we have $n^{\epsilon}<\mathbb{A}(I_G)\leq OPT(I_G)\leq n^{\epsilon}$, which is
    a contradiction.
\end{itemize}
We run the algorithm $\mathbb{A}$ on the instance $I_G$ and compare the output $\mathbb{A}(I_G)$ with
$n^{\epsilon}$. As seen above, this comparison allows us to decide the problem $X$. We now show how to choose the constant $F$
such that the running $2^{OPT^F}\cdot \text{poly}(n)$ is subexponential in $|I|$. We claim that $F = \frac{1}{D + 1}$ works.
Note that $OPT(I_G)\leq n$ always. Hence $2^{OPT^F}\cdot \text{poly}(n) \leq 2^{n^{F}}\cdot \text{poly}(n) =
2^{(|I|^{D})^{F}}\cdot \text{poly}(|I|^D) = 2^{|I|^{DF}}\cdot \text{poly}(|I|) = 2^{|I|^{\frac{D}{D+1}}}\cdot
\text{poly}(|I|)= 2^{o(I)}\cdot \text{poly}(|I|)$.
This implies we can could solve $X$ in subexponential time using $\mathbb{A}$. However $X$ was any problem chosen from the
class $\NP$, and hence $\NP\subseteq \SUBEXP$. \qed

\subsection{Proof of Theorem~\ref{clique}(2)}

Fix a constant $1>\delta>0$. Set $0<\epsilon = \frac{\delta}{\delta+1}$, or equivalently $\delta =
\frac{\epsilon}{1-\epsilon}$. Let $X$ be any problem in $\NP$. Let the Hastad-Zuckerman reduction from $X$ to Clique
\cite{hast,Zu} which creates a gap of at least $n^{1-\epsilon}$ map an instance $I$ of $X$ to the corresponding instance $I_G$
of Clique. Since the reduction is polynomial, we know that $n=|I_G|=|I|^D$ for some constant $D(\epsilon)>0$. Note that $D$
depends on $\epsilon$, which in turn depends on $\delta$. Hence, ultimately $D$ depends on $\delta$. If $I$ is a YES instance
of $X$, then $I_G$ contains a clique of size at least $n^{1-\epsilon}$ since each graph has a trivial clique of size one and
the gap between YES and NO instances of Clique is at least $n^{1-\epsilon}$. Similarly, observe that a graph on $n$ vertices
can have a clique of size at most $n$. To maintain the gap of at least $n^{1-\epsilon}$, it follows if $I$ is a NO instance of
$X$ then the maximum size of a clique in $I_G$ is at most $n^{\epsilon}$.

Suppose Clique has an FPT approximation algorithm $\textsc{ALG}$ with ratio $\rho(k)=k^{1-\delta}$ (recall
Definition~\ref{defn-1}). We show that this algorithm $\textsc{ALG}$ can decide the problem $X$. Set $k=n^{\epsilon}$. On the
input $(I_G, n^{\epsilon})$ to \textsc{ALG}, there are two possible outputs:
\begin{itemize}
\item \textsc{ALG} outputs \texttt{reject} $\Rightarrow OPT(I_G)< n^{\epsilon} \Rightarrow I$ is a NO instance of $X$
\item \textsc{ALG} outputs a clique of size $\geq \frac{k}{\rho(k)}\Rightarrow OPT(I_G)\geq \frac{k}{\rho(k)} =
    \frac{k}{k^{1-\delta}} = k^{\delta} = (n^{\epsilon})^{\delta}  = n^{1-\epsilon} \\ \Rightarrow I$ is a YES instance of
    $X$
\end{itemize}

Therefore the FPT approximation algorithm $\textsc{ALG}$ can decide the problem $X\in \NP$.

We now show how to choose the constant $F'$ such that the running $\text{exp}(k^{F'})\cdot \text{poly}(n)$ is subexponential
in $|I|$. We claim that $F' = \frac{1}{\epsilon\cdot D + 1}$ works. This is because $2^{k^{F'}}\cdot \text{poly}(n) =
2^{n^{\epsilon F'}}\cdot \text{poly}(n) = 2^{|I|^{\epsilon DF'}}\cdot \text{poly}(|I|^D) = 2^{|I|^{\frac{\epsilon D}{\epsilon
D + 1}}}\cdot \text{poly}(|I|) = 2^{o(I)}\cdot \text{poly}(|I|)$.

This implies we can could solve $X$ in subexponential time using \textsc{ALG}. However $X$ was any problem chosen from the
class $\NP$, and hence $\NP\subseteq \SUBEXP$. \qed

\section{Polytime $f(OPT)$-approximation for W[1]-hard problems}
\label{sec:conceptual-differences}

In Section~\ref{sec:intro-f-opt-poly-time} we have seen the motivation for designing polynomial time $f(OPT)$-approximation
algorithms for W[1]-hard problems such as \mec, \scss, \dsf and Directed Steiner Network. Our results are summarized in Figure~\ref{f(opt)-table}.

\begin{figure}
\begin{center}
    \begin{tabular}{ | l | l | l | l |}
    \hline
     & W[1]-hardness & Polytime Approx. Ratio \\ \hline
    Strongly Connected Steiner Forest & Guo et al.~\cite{guo-directed-steiner} & $OPT^{\epsilon}$ (Lemma~\ref{lem:scss-approx-poly})\\ \hline
    Directed Steiner Forest & Lemma~\ref{lem:dsf-hardness} & $OPT^{1+\epsilon}$ (Lemma~\ref{lem:dsf-approx})\\ \hline
    Directed Steiner Network & Lemma~\ref{lem:dsn-hardness} & $OPT^2$ (Lemma~\ref{lem:dsn-approx}) \\ \hline
    Minimum Edge Cover & Lemma~\ref{lem:mec-hardness} & $OPT-1$ (Lemma~\ref{lem:mec-approx}) \\ \hline
    Directed Multicut & Marx and Razgon~\cite{m3} & $3\cdot OPT$~\cite{gup}\\ \hline
    \end{tabular}
\end{center}
\caption{Polytime $f(OPT)$-approximation for W[1]-hard problems} \label{f(opt)-table}
\end{figure}

\subsection{The \scss Problem}

\begin{lemma}
For any constant $\epsilon > 0$, the \scss problem has a $2\cdot OPT^{\epsilon}$-approximation in polynomial time.
\label{lem:scss-approx-poly}
\end{lemma}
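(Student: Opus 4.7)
The plan is to reduce \scss to \dst and invoke the classical polynomial-time $O(k^{\epsilon})$-approximation for \dst due to Charikar et al., combined with the standard two-arborescence reduction for \scss. Given an \scss instance $(G,T)$ with $k=|T|$ terminals, I would fix an arbitrary root terminal $r\in T$ and observe that the optimum strongly connected Steiner subgraph $H^{*}$ (of cost $\OPT$) must contain, for every $t\in T$, directed paths in both directions between $r$ and $t$. In particular, $H^{*}$ contains an out-arborescence rooted at $r$ spanning $T$ and an in-arborescence rooted at $r$ spanning $T$, and each such arborescence has cost at most $\OPT$.

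Next, I would approximate these two arborescences separately. Running the Charikar et al.~\dst algorithm on $G$ with root $r$ and terminal set $T\setminus\{r\}$ returns a subgraph $A_{\mathrm{out}}$ of cost at most $k^{\epsilon}\cdot\OPT$ that reaches every terminal from $r$. Running the same algorithm on the edge-reversed graph $G^{\mathrm{rev}}$ with the same root and terminals, and then reversing the output, yields an in-arborescence $A_{\mathrm{in}}$ of cost at most $k^{\epsilon}\cdot\OPT$ (since the optimum in-arborescence in $G$ corresponds to an optimum out-arborescence in $G^{\mathrm{rev}}$ of the same cost, still bounded by $\OPT$). The union $A_{\mathrm{out}}\cup A_{\mathrm{in}}$ is clearly a feasible \scss, because any two terminals can reach each other by routing through $r$, and its total cost is at most $2k^{\epsilon}\cdot\OPT$.

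The final step is to translate $k^{\epsilon}$ into $\OPT^{\epsilon}$. For this I would use the elementary bound $\OPT\ge k$: any strongly connected subgraph spanning at least $k$ vertices has at least $k$ edges (since every vertex needs out-degree at least one, and this is tight on a directed cycle), so in the unit-cost setting $\OPT\ge k$ and hence $k^{\epsilon}\le\OPT^{\epsilon}$. Combining, the subgraph we output has cost at most $2\cdot\OPT^{\epsilon}\cdot\OPT$, which is a $2\cdot\OPT^{\epsilon}$-approximation in polynomial time. The only nontrivial ingredient is the Charikar et al.~\dst approximation itself; the remaining pieces are the textbook two-arborescence reduction and the trivial lower bound $\OPT\ge k$, so no significant obstacle is anticipated beyond choosing the Charikar et al.~parameter carefully so that the hidden constant in $O(k^{\epsilon})$ is absorbed into the exponent (e.g., by running the algorithm with $\epsilon/2$ and handling small $k$ by brute force).
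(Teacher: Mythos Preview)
Your proposal is correct and follows essentially the same argument as the paper: both reduce \scss to two \dst instances (one on $G$ and one on the reverse graph $G_{\mathrm{rev}}$, rooted at a fixed terminal), apply the Charikar et al.\ $|T|^{\epsilon}$-approximation to each, take the union, and then convert $|T|^{\epsilon}$ to $\OPT^{\epsilon}$ via the elementary bound $\OPT\ge |T|$. The only cosmetic difference is that the paper justifies $\OPT\ge |T|$ by counting incoming edges at terminals rather than outgoing edges, and it does not discuss absorbing the hidden constant of the \dst algorithm into the exponent.
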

\begin{proof}
Fix any constant $\epsilon>0$. Let $G_{rev}$ denote the reverse graph obtained from $G$, i.e., reverse the orientation of each
edge. Any solution of the Strongly Connected Steiner Subgraph instance must contain a path from $t_1$ to each terminal in
$T\setminus t_1$ and vice versa. Consider the following two instances of the \dst problem: $I_1=(G,t_1,T\setminus t_1)$ and
$I_2=(G_{rev},t_1,T\setminus t_1)$. In~\cite{CCC} an $|T|^{\epsilon}$-approximation is designed for \dst in polynomial time,
for any constant $\epsilon>0$. Let $E_1,E_2$ be the $|T|^{\epsilon}$-approximate solutions for the two instances and say that
their optimum solutions are $OPT_1, OPT_2$ respectively. Let OPT be the size of optimum solution for the \scss instance, then
clearly $|OPT| \geq \max\{|OPT_1|,|OPT_2|\}$. Clearly $E_1\cup E_2$ is a solution for the \scss instance as $E_j$ is a
solution for $I_j$ for $1\leq j\leq 2$. It now remains to bound the size of this solution: $|E_1\cup E_2|\leq |E_1|+|E_2| \leq
|T|^{\epsilon}|OPT_1| + |T|^{\epsilon}|OPT_2| = |T|^{\epsilon}(|OPT_1| + |OPT_2|) \leq 2|T|^{\epsilon}|OPT|$. As every
terminal has at least one incoming edge (and these edges are pairwise disjoint) we get that $OPT\geq |T|=k$. Therefore
$|T|^\epsilon\leq OPT^\epsilon$ which implies a $2\cdot OPT^\epsilon$-approximation factor.
\end{proof}

\subsection{The \dsf Problem} \label{sec:dsf}

The \dsf problem is LabelCover hard and thus admits no $2^{\log^{1-\epsilon} n}$-approximation for any constant $\epsilon$
\cite{dodis}. The best know approximation factor for the problem is $n^{\frac{2}{3}}$ \cite{FGZ,pib}. We now define the
problem formally:

\begin{center}
\noindent\framebox{\begin{minipage}{6.00in}
\mbox{\textbf{\dsf}}\\
\emph{Input }: A digraph $G=(V,E)$ and a set of terminals $T=\{(s_1,t_1),\ldots,(s_k,t_k)\}$.\\
\emph{Problem}: Does there exist a set $E'\subseteq E$ such that $|E'|\leq p$ and $(V,E')$ has a $s_i\rightarrow t_i$ path for
every $i\in
[k]$.\\
\emph{Parameter}: $p$
\end{minipage}}
\end{center}

\begin{lemma}
The \dsf problem is W[1]-hard parameterized by solution size plus number of terminal pairs.
\label{lem:dsf-hardness}
\end{lemma}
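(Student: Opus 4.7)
The plan is to prove W[1]-hardness via a polynomial, parameter-preserving reduction from \mcc, which is W[1]-hard when parameterized by the number of color classes $\kappa$; recall that \mcc asks, given a graph $H$ partitioned into classes $V_1,\dots,V_\kappa$, whether $H$ contains a $\kappa$-clique using exactly one vertex from each class. The goal is to build a \dsf instance in which both the number of terminal pairs and the budget are $O(\kappa^2)$, so that the combined parameter ``solution size plus number of terminal pairs'' is bounded by a function of $\kappa$.

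For the construction, I would introduce, for every unordered color pair $\{i,j\}$ with $i<j$, a terminal pair $(s_{ij},t_{ij})$, yielding $\binom{\kappa}{2}$ pairs. For every vertex $v\in V_i$ introduce a \emph{selector edge} $e_v = v^{\mathrm{in}}\to v^{\mathrm{out}}$; the intended semantics is that $e_v$ lies in the solution iff $v$ is the chosen representative of $V_i$ in the clique. For every edge $\{u,v\}\in E(H)$ with $u\in V_i$, $v\in V_j$, $i<j$, install three \emph{connection edges} $s_{ij}\to u^{\mathrm{in}}$, $u^{\mathrm{out}}\to v^{\mathrm{in}}$, $v^{\mathrm{out}}\to t_{ij}$, so that together with $e_u$ and $e_v$ they form a length-$5$ directed $s_{ij}\to t_{ij}$ path. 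Set the budget $p=\kappa+3\binom{\kappa}{2}$.

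Completeness is immediate: a $\kappa$-clique $\{v_1,\dots,v_\kappa\}$ with $v_i\in V_i$ yields a DSF solution of size exactly $p$ by taking the $\kappa$ selector edges $e_{v_i}$ together with, for each $i<j$, the three connection edges of the gadget built for $\{v_i,v_j\}$. For soundness, note that the source edge out of $s_{ij}$ and the sink edge into $t_{ij}$ are each usable by only one terminal pair, so a minimum solution pays at least $2\binom{\kappa}{2}$ such pair-specific edges plus at least one bridging edge per pair, while the \emph{only} edges that can be reused across pairs are the selectors. A tight counting argument then shows a solution of size at most $p$ can use at most $\kappa$ selectors; since each pair $\{i,j\}$ requires a selector from $V_i$ and one from $V_j$, exactly one vertex must be chosen per color class, and the bridging edges connecting these chosen selectors must exist in $H$, giving a $\kappa$-clique.

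The main technical obstacle is controlling \emph{indirect} $s_{ij}\to t_{ij}$ paths that traverse a selector $e_w$ for some $w\in V_l$ with $i<l<j$, because such a path can reuse bridging edges already paid for by the terminal pairs $\{i,l\}$ and $\{l,j\}$; a naive analysis then lets a NO-instance solution undercut the YES-instance budget. To close this loophole I would strengthen the construction by making each bridging edge pair-specific (for instance, by subdividing $u^{\mathrm{out}}\to v^{\mathrm{in}}$ into a short path through fresh vertices $m_{u,v}^{(ij)}$ belonging only to pair $\{i,j\}$), so that any indirect route incurs \emph{additional} pair-specific edges that cannot be recovered through selector sharing, and adjust $p$ accordingly (still $O(\kappa^2)$). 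With this modification the tight soundness analysis goes through, and since $\binom{\kappa}{2}$ pairs and $O(\kappa^2)$ budget are bounded by a function of $\kappa$, the reduction is parameter-preserving, which establishes the lemma.
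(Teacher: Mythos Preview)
Your approach differs from the paper's in a significant way. The paper does \emph{not} reduce from \mcc directly; instead it reduces from \scss, whose W[1]-hardness (parameterized by solution size plus number of terminals) is already established by Guo et al. Given an SCSS instance $(G,\{t_1,\dots,t_\ell\},p)$, the paper simply attaches to each $t_i$ a fresh source $r_i$ and a fresh sink $s_i$ via edges $(r_i,t_i),(t_i,s_i)$, and takes all pairs $(r_i,s_j)$ with $i\neq j$ as the DSF terminal pairs. Since every $r_i$ has out-degree~$1$ and every $s_i$ has in-degree~$1$, these $2\ell$ attachment edges are forced into any solution, and what remains is exactly an SCSS solution on $G$. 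Correctness is two lines, and the parameters transform as $p\mapsto p+2\ell$, $\ell\mapsto\ell(\ell-1)$.

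Your direct \mcc reduction is a reasonable idea in principle, but the soundness argument has a genuine gap that your proposed fix does not close. Subdividing $u^{\mathrm{out}}\to v^{\mathrm{in}}$ through a fresh vertex $m^{(ij)}_{u,v}$ does \emph{not} make that bridge usable only by pair $\{i,j\}$: nothing in \dsf restricts which terminal pair's path may traverse a given edge, so a path for pair $\{i,j\}$ can still run $s_{ij}\to u^{\mathrm{in}}\to u^{\mathrm{out}}\to m^{(il)}_{u,w}\to w^{\mathrm{in}}\to w^{\mathrm{out}}\to m^{(lj)}_{w,v}\to v^{\mathrm{in}}\to v^{\mathrm{out}}\to t_{ij}$ and pay \emph{zero} bridging edges charged to $\{i,j\}$, reusing bridges already purchased for $\{i,l\}$ and $\{l,j\}$. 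Hence your accounting ``$2\binom{\kappa}{2}$ source/sink edges plus one bridging edge per pair'' is not a valid lower bound on the cost, and the conclusion that at most $\kappa$ selectors are affordable does not follow. A correct \mcc-to-\dsf reduction along these lines typically needs additional machinery (e.g., separate edge-selection gadgets per color pair, or an orientation trick that forces each pair's path through a dedicated checkpoint) that you have not introduced. The paper sidesteps all of this by inheriting the hard combinatorics from the SCSS hardness result.
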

\begin{proof}
We give a reduction from the \scss problem. Consider an instance $(G,T,p)$ of \scss where $T=\{t_1,t_2,\ldots,t_{\ell}\}$. We
now build a new graph $G^{*}$ as follows:
\begin{itemize}
\item Add $2\ell$ new vertices: for every $i\in \ell$, we introduce vertices $r_i$ and $s_i$.
\item For every $i\in [\ell]$, add the edges $(r_i,t_i)$ and $(t_i,s_i)$.
\end{itemize}

Let the terminal pairs be $T^{*}=\{(r_i,s_j)\ |\ 1\leq i,j\leq \ell ; i\neq j \}$. We claim that the Strongly Connected
Steiner Subgraph instance $(G,T)$ has a solution of size $p$ if and only if there is a solution for the \dsf instance
$(G^{*},T^{*})$ of size $p+2\ell$.

Suppose there is a solution for the \scss instance of size $p$. Adding the edges from $E(G^{*})\setminus E(G)$ clearly gives a
solution for the \dsf instance of size $p+2\ell$. Conversely, suppose we have a solution for the \dsf instance of size
$p+2\ell$. Since $t_i$ is the only out-neighbor, in-neighbor of $r_i, s_i$ respectively the solution must contain all the
edges from $E(G^{*})\setminus E(G)$. Removing these edges clearly gives a solution of size $p$ to the \scss instance. Note
that $|T^{*}|=\ell(\ell-1)$. Since \scss is W[1]-hard parameterized by solution size plus number of terminals, we have that
\dsf is W[1]-hard parameterized by solution size plus number of terminal pairs.
\end{proof}

\begin{lemma}
The \dsf problem admits an $OPT^{1+\epsilon}$-approximation in polynomial time. \label{lem:dsf-approx}
\end{lemma}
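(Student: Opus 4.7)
The plan is to reduce the DSF instance to a collection of Directed Steiner Tree (DST) subinstances, one per distinct source, and invoke the polynomial-time $|T|^{\epsilon}$-approximation for DST from \cite{CCC} that was already used in Lemma~\ref{lem:scss-approx-poly}. The entire approach is driven by an elementary structural observation: if $\OPT = q$, then the optimum solution has at most $q$ edges and hence has at most $q$ distinct tail-endpoints and at most $q$ distinct head-endpoints. Since every source $s_i$ must be the tail of some edge of $\OPT$ (and every sink $t_i$ the head of some edge), after deduplicating identical pairs the number of remaining pairs satisfies $k \leq q^2$.

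First I would guess $q = \OPT$ by trying all values in $\{1,2,\ldots,n\}$ and returning the smallest feasible output. For a fixed guess $q$, dedupe the pairs; if more than $q^2$ pairs remain, reject the guess. Otherwise, for each distinct source $s$, form the DST subinstance with root $s$ and terminal set $T_s := \{t_i : s_i = s\}$, and invoke the DST approximation to obtain an out-arborescence $A_s$. The algorithm outputs $\bigcup_s A_s$, which is a feasible DSF solution because each pair $(s_i,t_i)$ is served by $A_{s_i}$.

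For the size bound, I would argue that the DST subinstance rooted at $s$ has optimum at most $q$: restricting $\OPT$ to those edges lying on an $\OPT$-path from $s$ to some $t \in T_s$ is already a feasible out-arborescence of at most $q$ edges. Hence the DST approximation returns at most $|T_s|^{\epsilon} \cdot q \leq k^{\epsilon} \cdot q$ edges. Summing over the at most $q$ distinct sources gives a total of at most $q \cdot k^{\epsilon} \cdot q \leq q \cdot (q^{2})^{\epsilon} \cdot q = q^{2+2\epsilon}$ edges, i.e., cost $O(\OPT^{2+2\epsilon})$. Running the DST approximation with parameter $\epsilon/2$ in place of $\epsilon$ produces cost $O(\OPT^{2+\epsilon})$, which is an $\OPT^{1+\epsilon}$-approximation in the sense of Definition~\ref{defn-opt}. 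The overall running time is polynomial since the DST approximation is polynomial for each constant $\epsilon$, and we iterate over at most $n$ guesses of $q$ and at most $n$ sources per guess.

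The main obstacle, although a mild one, is cleanly establishing $k \leq \OPT^2$: this requires first discarding degenerate pairs with $s_i = t_i$, then observing that the useful part of $\OPT$ really does have every $s_i$ as a tail-endpoint and every $t_i$ as a head-endpoint, so the crude bound ``a digraph with $q$ edges has at most $q$ distinct tails and at most $q$ distinct heads'' suffices. Once this bookkeeping is in hand, the argument is a direct application of the DST approximation.
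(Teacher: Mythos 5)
Your proposal is correct and follows essentially the same route as the paper: decompose by distinct source into Directed Steiner Tree subinstances rooted at each source and apply the $|T|^{\epsilon}$-approximation of~\cite{CCC}, then bound the number of sources and the size of each terminal set by functions of $\OPT$. The paper's version is a bit more streamlined: it directly notes that every vertex of $T_v$ must have its own incoming edge in any feasible solution, so $|T_v|\leq \OPT$ (rather than your cruder $|T_s|\leq k\leq \OPT^2$), which yields $\sum_{v\in S}|T_v|^{\epsilon}\cdot \OPT_v \leq |S|\cdot \OPT^{1+\epsilon}\leq \OPT^{2+\epsilon}$ without needing the $\epsilon/2$ rescaling, and it dispenses entirely with your initial guessing loop over $q$, which does no real work here since the cost analysis is already expressed in terms of $\OPT$.
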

\begin{proof}
Let $S=\{v\ |\ \exists\ x\ \text{such that}\ (v,x)\in T\}$. For every $v\in S$, let $T_v = \{x\ |\ (v,x)\in T\}$. For each $v\in S$, let the optimum for the instance $(G,v, T_v)$ of \dst be $OPT_v$. Clearly $OPT_v\leq OPT$, where $OPT$ is the optimum of the given \dsf instance.For each $v\in S$, we take the $|T_v|^{\epsilon}$-approximation given in~\cite{CCC} for the instance $(G, v, T_v)$ of \dst, and output the union of all these Steiner trees. Clearly this gives a feasible solution. We now analyze the cost.

Since each vertex in $S$ must have its own outgoing edge in the solution, we have $|S|\leq OPT$. Similarly, for every $v\in S$ each vertex of $T_v$ must have its own incoming edge in the solution, and hence $|T_v|\leq OPT$. Hence the cost of our solution is upper bounded by $\sum_{v\in S} |T_v|^{\epsilon}\cdot OPT_v \leq \sum_{v\in S} OPT^{\epsilon}\cdot OPT \leq |S|\cdot OPT^{1+\epsilon} \leq OPT^{2+\epsilon}$. Therefore, we get a $OPT^{1+\epsilon}$-approximation.
\end{proof}

\subsection{The \dsn Problem} \label{sec:dsn}

The \dsn problem is not known to admit any non-trivial approximation and of course is LabelCover hard. We define the problem
formally:

\begin{center}
\noindent\framebox{\begin{minipage}{6.00in}
\mbox{\textbf{\dsn}}\\
\emph{Input }: A digraph $G=(V,E)$, a set of terminals $T=\{(s_1,t_1),\ldots,(s_k,t_k)\}$, a demand $d_i$ between $s_i,t_i$
for every $i\in [k]$\\
\emph{Problem}: Does there exist a set $E'\subseteq E$ such that $|E'|\leq p$ and $(V,E')$ has $d_i$ disjoint $s_i\rightarrow
t_i$ paths for
every $i\in [k]$. \\
\emph{Parameter}: $p$
\end{minipage}}
\end{center}

\begin{lemma}
The \dsn problem is W[1]-hard parameterized by solution size plus number of terminal pairs.
\label{lem:dsn-hardness}
\end{lemma}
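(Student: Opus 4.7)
The plan is to observe that \dsn is a direct generalization of \dsf, so W[1]-hardness transfers by a trivial identity reduction from the hardness result just established in Lemma~\ref{lem:dsf-hardness}. Specifically, given an instance $(G, T=\{(s_i,t_i)\}_{i=1}^k, p)$ of \dsf, I would output the \dsn instance with the same digraph $G$, the same set of terminal pairs $T$, demand $d_i = 1$ for every $i \in [k]$, and budget $p$. The parameter (solution size plus number of terminal pairs) is preserved exactly.

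The correctness is immediate: a set $E' \subseteq E$ with $|E'| \le p$ routes a single $s_i \to t_i$ path in $(V,E')$ for each $i$ if and only if it is a feasible \dsf solution, since demand $1$ between $s_i$ and $t_i$ is exactly the existence of one $s_i \to t_i$ path (``disjoint'' is vacuous at demand $1$). Hence the YES/NO instances correspond bijectively, and any FPT algorithm for \dsn parameterized by $p+k$ would yield one for \dsf with the same parameter, contradicting Lemma~\ref{lem:dsf-hardness}.

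There is essentially no technical obstacle here; the only thing to check is that the reduction is a parameterized reduction in the formal sense, i.e., it runs in FPT time and the new parameter is bounded by a computable function of the old one. Both are obvious since the reduction is the identity on the graph and terminal set, adds $k$ trivial demand labels, and keeps $p$ and $k$ unchanged. Thus the lemma follows as an immediate corollary of Lemma~\ref{lem:dsf-hardness}.
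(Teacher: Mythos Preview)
Your proposal is correct and matches the paper's own proof exactly: the paper simply notes that \dsf is the special case of \dsn with all demands $d_i=1$, and invokes Lemma~\ref{lem:dsf-hardness}.
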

\begin{proof}
The lemma follows from Lemma~\ref{lem:dsf-hardness} and the fact that the \dsf problem is a special case of the \dsn problem
with $d_i=1$ for every $i$.
\end{proof}

\begin{lemma}
The \dsn problem admits an $OPT^2$-approximation in polynomial time. \label{lem:dsn-approx}
\end{lemma}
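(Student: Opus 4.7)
The plan is to solve one independent min-cost flow problem per terminal pair and output the union of the supports, directly mirroring the per-source argument of Lemma~\ref{lem:dsf-approx} but replacing single $s_i\to t_i$ paths by flows of value $d_i$. Concretely, for each $i\in[k]$ I would set every edge of $G$ to have capacity one, compute an integral min-cost flow of value $d_i$ from $s_i$ to $t_i$, and let $E_i$ be the set of edges carrying positive flow. The algorithm returns $E':=\bigcup_{i=1}^{k} E_i$. Min-cost integer flow in a unit-capacity digraph is polynomial-time solvable, and we invoke it $k$ times, so the overall running time is polynomial in the input size.

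For feasibility, by integrality of flows in a unit-capacity network (equivalently, Menger's theorem) the support $E_i$ already admits $d_i$ edge-disjoint $s_i\to t_i$ paths, and this property is monotone under taking supergraphs, so $E'$ simultaneously witnesses the demand of every pair.

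The cost analysis combines two elementary bounds. First, the global optimum $E^{*}$ is \emph{a fortiori} a feasible solution to the single-pair sub-instance on $(s_i,t_i,d_i)$, and $|E_i|$ is by construction the \emph{minimum} number of edges supporting $d_i$ edge-disjoint $s_i\to t_i$ paths in $G$, so $|E_i|\leq |E^{*}|=\OPT$. Second, letting $S$ and $T$ denote the sets of distinct sources and distinct targets appearing in the terminal list, each $s\in S$ must contribute at least one outgoing edge and each $t\in T$ at least one incoming edge to any feasible solution (exactly the observation used in Lemma~\ref{lem:dsf-approx}), giving $|S|,|T|\leq \OPT$; since every terminal pair is an element of $S\times T$ we obtain $k\leq |S|\cdot|T|\leq \OPT^{2}$. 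Putting the two together, $|E'|\leq \sum_{i=1}^{k}|E_i|\leq k\cdot \OPT\leq \OPT^{3}$, which is precisely an $\OPT^{2}$-approximation when measured against $\OPT$.

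I do not see a substantive obstacle in this argument; the one point worth flagging is that the cost we care about is the number of \emph{distinct} edges used, which coincides with the value of the min-cost flow exactly because we insist on unit edge capacities. Dropping the capacity constraint and using uncapacitated min-cost flow would minimize total flow length instead and break the $|E_i|\leq \OPT$ inequality, since a single edge traversed by all $d_i$ flow units would be counted $d_i$ times.
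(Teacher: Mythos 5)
Your proposal is correct and follows the same approach as the paper's proof: one min-cost flow per terminal pair to compute the cheapest edge set supporting the demand, a union over all pairs, and the bound $|E_i|\le \OPT$, $k\le|S|\cdot|S'|\le \OPT^2$, giving total cost at most $\OPT^3$. Your explicit insistence on unit edge capacities (so that flow cost equals the number of distinct support edges) is a useful clarification of a point the paper glosses over when it says ``using min-cost max-flow.''
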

\begin{proof}
Let $S=\{v\ |\ \exists\ x\ \text{such that}\ (v,x)\in T\}$ and $S'=\{x\ |\ \exists\ v\ \text{such that}\ (v,x)\in T\}$. For every $v\in S$, let $T_v = \{x\ |\ (v,x)\in T\}$. For each $v\in S$ and $x\in T_V$, let the demand for the pair $(v,x)$ be $d_{vx}$. Make $v$ as a sink and $x$ as a source. Using min-cost max-flow, find the smallest edge set, say $E_{vx}$, such that there are $d_{vx}$ disjoint $v\rightarrow x$ paths in $G$. Clearly $|E_{vx}|\leq OPT$ since any solution for the \dsn instance must contain $d_{vx}$ disjoint $v\rightarrow x$ paths. Also, as seen before $OPT\geq |S|$ since each vertex in $S$ must have at least own outgoing edge in any solution. Similarly $OPT\geq |S'|$. We output $\bigcup_{v\in S, x\in T_v} E_{vx}$ as our solution. Clearly this is a feasible solution. Its cost is $\sum_{v\in S, x\in T_v} E_{vx} \leq \sum_{v\in S, x\in T_v} OPT \leq |S|\cdot |S'|\cdot OPT \leq OPT\cdot OPT\cdot OPT$, and hence this gives an $OPT^2$-approximation.
\end{proof}

\subsection{The \mec Problem} \label{sec:mec}

In this section, we show that the \mec problem is W[1]-hard parameterized by size of the solution, and it admits an
$(OPT-1)$-approximation in polynomial time. The best approximation for \mec is $O(n^{0.172})$ due to Chlamtac et
al.~\cite{chlamtac-mec}.

\begin{center}
\noindent\framebox{\begin{minipage}{6.00in}
\textbf{\mec}\\
\emph{Input }: A graph $G=(V,E)$ and an integer $k$\\
\emph{Problem} : Does there exists a set $S\subseteq V$ such that $|S|\leq p$ and the number of edges with both endpoints in
$S$ is at least $k$.\\
\emph{Parameter }: $p$
\end{minipage}}
\end{center}

To show that W[1]-hardness of \mec we reduce from the \mcc problem which is known to be W[1]-hard~\cite{multi-colored-clique}\footnote{Cai~\cite{DBLP:journals/cj/Cai08} has independently shown the W[1]-hardness of \mec with parameter $p$. They call this problem as \textsc{Maximum $p$-Vertex Subgraph}}:

\begin{lemma}
The \mec problem is W[1]-hard parameterized by the size of the solution. \label{lem:mec-hardness}
\end{lemma}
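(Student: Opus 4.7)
The plan is to give a parameter-preserving reduction from \mcc (W[1]-hard, parameterized by the number $k$ of color classes) to \mec (parameterized by $p$). The guiding observation is entirely elementary: any set of $k$ vertices of a graph spans at most $\binom{k}{2}$ edges, with equality if and only if the set is a clique. Consequently, asking for a $k$-vertex set inducing $\binom{k}{2}$ edges is exactly the problem of finding a $k$-clique, so the reduction essentially forwards the input graph.

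Concretely, I would start from an instance of \mcc consisting of a graph $H$ and color classes $V_1,\dots,V_k$. The first step is a standard preprocessing move: delete every edge whose two endpoints lie in the same $V_i$. Such edges can never participate in a multicolored clique, so this does not change the answer, and it has the useful side effect of making each $V_i$ an independent set in $H$. I would then output the \mec instance $(H, p, k')$ with $p := k$ and edge-target $k' := \binom{k}{2}$. Since the new parameter $p$ equals the original parameter $k$, this is an FPT reduction, and it clearly runs in polynomial time.

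The next step is to verify the equivalence. The forward direction is immediate: a multicolored $k$-clique in $H$ is a $k$-vertex set spanning exactly $\binom{k}{2}$ edges. For the reverse direction, I would take a set $S\subseteq V(H)$ with $|S|\le k$ that spans at least $\binom{k}{2}$ edges. Combining $|S|\le k$ with the trivial upper bound $\binom{|S|}{2}$ on the number of spanned edges forces $|S|=k$ together with the requirement that every pair inside $S$ is an edge; hence $S$ is a $k$-clique of $H$. The preprocessing guarantees that every $V_i$ is independent, so $|S\cap V_i|\le 1$ for every $i$; summing these bounds to $|S|=k$ forces $|S\cap V_i|=1$ for each $i$, so $S$ is the desired multicolored clique.

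I do not anticipate any real obstacle: the reduction is essentially the identity on graphs, and both directions of the equivalence are one-line counting arguments. The two points that merit a small amount of care are (i) the preprocessing step that turns each $V_i$ into an independent set, which is what rules out a monochromatic clique in the reverse direction, and (ii) making sure the parameter that carries over is $p$ (the size of the chosen vertex set) rather than the edge-target $k'=\binom{k}{2}$, so the reduction fits the FPT-reduction framework.
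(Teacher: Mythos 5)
Your proof is correct and uses essentially the same reduction as the paper: forward the input graph, set the vertex budget to $k$ and the edge target to $\binom{k}{2}$, and observe that only a $k$-clique can span $\binom{k}{2}$ edges among at most $k$ vertices. You are slightly more explicit than the paper in first deleting intra-class edges so that each color class is independent; the paper simply asserts that $\phi$ is a proper coloring without spelling out this standard preprocessing step.
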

\begin{proof}
Given an instance $I_1=(G,\phi,p)$ of \mcc, we can consider another instance $I_2=(G,k,p)$ of \mec where $k=\binom{p}{2}$.
Clearly if $I_1$ is a YES instance, then $G$ has a multicolored clique and $I_2$ is a YES instance. In the other direction, if
$I_2$ is a YES instance then the $p$-sized set must form a clique in $G$, and must be in different color classes as $\phi$ is
a proper vertex coloring. This shows that \mec is W[1]-hard parameterized by the size of the covering set.
\qed
\end{proof}

Now we show give an approximation algorithm for the \mec problem.
\begin{lemma}
The \mec problem admits an $(OPT-1)$-approximation in polynomial time. \label{lem:mec-approx}
\end{lemma}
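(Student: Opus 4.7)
The plan is to exploit the obvious slack in the approximation ratio: if $OPT = p$ vertices suffice to induce $k$ edges, then $k \leq \binom{p}{2}$, so $2k \leq p(p-1) = OPT \cdot (OPT-1)$. Any $k$ edges of $G$ together span at most $2k$ vertices, which is already within the desired bound. So the algorithm is just: pick any $k$ edges of $G$ and output the set $S'$ of all their endpoints.

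More precisely, I would argue as follows. First observe that if the instance is feasible (which is guaranteed under the FPT optimum approximation framework of Definition~\ref{defn-opt}) then $|E(G)| \geq k$, since the $\binom{p}{2}$ edges induced by an optimal $S$ already provide $k$ edges of $G$. Let $e_1, \ldots, e_k$ be any $k$ edges of $G$ (chosen arbitrarily in polynomial time) and set $S' = \bigcup_{i=1}^{k} e_i$. By construction $G[S']$ contains $e_1, \ldots, e_k$, so $S'$ is feasible. Since each $e_i$ contributes at most two new vertices, $|S'| \leq 2k$.

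To conclude the ratio, I would use the fact that $OPT = p$ means there exists $S$ with $|S| = p$ and $|E(G[S])| \geq k$; in particular $k \leq \binom{p}{2} = \frac{p(p-1)}{2}$, so
\[
|S'| \;\leq\; 2k \;\leq\; p(p-1) \;=\; OPT \cdot (OPT-1),
\]
which is precisely the guarantee required by Definition~\ref{defn-opt} with $\rho(OPT) = OPT - 1$. There is no real obstacle here: the only thing to verify is that feasibility of the instance implies $|E(G)| \geq k$ so that the naive edge-picking step is well-defined, and this is immediate. Note that this yields a genuinely polynomial-time algorithm (no dependence on $OPT$ in the running time), which is stronger than the FPT notion required.
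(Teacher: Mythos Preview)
Your argument is correct and essentially identical to the paper's own proof: pick any $k$ edges, observe that their at most $2k$ endpoints form a feasible set, and use $k \le \binom{OPT}{2}$ to bound $2k \le OPT(OPT-1)$. The paper even remarks, as you implicitly do, that any feasible solution already achieves this ratio.
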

\begin{proof} Let $k$ be the desired number of edges in the solution and let $OPT$
be the minimum number of vertices required. If there is a feasible solution, then there must be at least $k$ edges in the graph. Pick any $k$ edges, and let $p'$ the size of the set which is the union of their endpoints. Clearly $p'\leq 2k$. Since $k\leq \frac{OPT(OPT-1)}{2}$, we have $p'\leq 2k\leq OPT(OPT-1)$, and hence we get a $(OPT-1)$-approximation\footnote{In fact, any feasible solution gives a $(OPT-1)$-approximation}.
\end{proof}

\section{Constant Factor FPT Approximation For SCSS}
\label{sec:scss}

In this section we show that SCSS has an FPT
2-approximation. We define the problem formally:

\begin{center}
\noindent\framebox{\begin{minipage}{6.00in}
\mbox{\textbf{\scss} (SCSS)}\\
\emph{Input }: An directed graph $G=(V,E)$, a set of terminals $T=\{t_1,t_2,\ldots,t_{\ell}\}$ and an integer $p$\\
\emph{Problem}: Does there exists a set $E'\subseteq E$ such that $|E'|\leq p$ and the graph $G'=(V,E')$ has a $t_{i}\rightarrow t_j$ path for every $i\neq j$\\
\emph{Parameter}: $p$
\end{minipage}}
\end{center}

\begin{lemma}
\scss has an FPT 2-approximation. \label{lem:scss-approx-fpt}
\end{lemma}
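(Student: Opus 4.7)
The plan is to reduce an SCSS instance to two instances of \dst with the same terminal set, solve each exactly in FPT time, and output their union. The key structural observation is that any feasible SCSS solution must use at least one outgoing edge per terminal, so a strongly connected subgraph on $\ell \geq 2$ terminals has at least $\ell$ edges. Hence for any feasible instance $p \geq \ell$, so parameterizing by $p$ also bounds the number of terminals $\ell$.

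\noindent\textbf{Algorithm.} Fix an arbitrary terminal $r \in T$. Let $G_{rev}$ be the graph obtained from $G$ by reversing every edge. Consider the two \dst instances $I_1 = (G, r, T \setminus \{r\})$ and $I_2 = (G_{rev}, r, T \setminus \{r\})$. Using the classical Dreyfus--Wagner style dynamic programming for directed Steiner tree parameterized by the number of terminals, each instance can be solved \emph{exactly} in time $3^{\ell} \cdot \text{poly}(n)$. Let $E_1$ be the optimal solution for $I_1$, and let $E_2$ be the edge set in $G$ obtained by re-reversing the optimal solution for $I_2$. The algorithm outputs $E_1 \cup E_2$.

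\noindent\textbf{Analysis.} For feasibility, given any two distinct terminals $t_i, t_j \in T$, the tree $E_2$ contains a $t_i \to r$ path (since the reverse-Steiner-tree gives $r \to t_i$ paths in $G_{rev}$) and $E_1$ contains an $r \to t_j$ path, so $E_1 \cup E_2$ contains a $t_i \to t_j$ path. For the approximation ratio, let $F^*$ be an optimal SCSS solution, so $|F^*| = \text{OPT}$. Since $(V, F^*)$ is strongly connected on $T$, it contains $r \to t$ paths for every $t \in T \setminus \{r\}$, so it contains a feasible solution for $I_1$; hence $|E_1| \leq \text{OPT}$. Symmetrically $|E_2| \leq \text{OPT}$, giving $|E_1 \cup E_2| \leq 2 \cdot \text{OPT}$. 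The total running time is $3^{\ell} \cdot \text{poly}(n) \leq 3^{p} \cdot \text{poly}(n)$, which is FPT in $p$.

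\noindent\textbf{Main obstacle.} There is essentially no obstacle beyond invoking an exact FPT algorithm for \dst parameterized by the number of terminals; the $2$-approximation itself follows immediately from the ``union of forward and backward Steiner trees rooted at a fixed terminal'' trick. The only subtle point is justifying that the parameterization by $p$ suffices to invoke the \dst FPT algorithm, which is handled by the lower bound $p \geq \ell$ on any feasible instance (rejecting immediately if $p < \ell$).
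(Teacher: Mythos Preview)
Your proposal is correct and takes essentially the same approach as the paper: reduce to two \dst instances rooted at a fixed terminal (one in $G$, one in $G_{rev}$), solve each exactly, and return the union, yielding a $2$-approximation. The only cosmetic difference is in how the FPT running time is justified: you bound the number of terminals by $p$ (via ``each terminal has an outgoing edge'') and then invoke Dreyfus--Wagner parameterized by $\ell$, whereas the paper phrases it as ``\dst is FPT parameterized by solution size'' and uses $OPT_i \leq OPT \leq k$; these are equivalent.
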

\begin{proof}
Let $G_{rev}$ denote the reverse graph obtained from $G$, i.e., reverse the orientation of each edge. Any solution of SCSS
instance must contain a path from $t_1$ to each terminal in $T\setminus t_1$ and vice versa. Consider the following two
instances of the \dst problem: $I_1=(G,t_1,T\setminus t_1)$ and $I_2=(G_{rev},t_1,T\setminus t_1)$, and let their optimum be
be $OPT_1, OPT_2$ respectively. Let $OPT$ be the optimum of given SCSS instance and $k$ be the parameter. If $OPT>k$ then we
output anything (see Definition~\ref{defn-1}). Otherwise we have $k\geq OPT\geq \max\{OPT_1,OPT_2\}$. We know that the \dst
problem is FPT parameterized by the size of the solution~\cite{dst-dreyfus}.
Hence we find the values $OPT_1, OPT_2$ in time which is FPT in $k$. Clearly the union of solutions for $I_1$ and $I_2$ os a
solution for instance $I$ of SCSS. The final observation is $OPT_1 + OPT_2 \leq OPT + OPT = 2\cdot OPT$.
\end{proof}

Guo et al.~\cite{guo-directed-steiner} show that SCSS is W[1]-hard parameterized by solution size plus number of terminals. It
is known that SCSS has no $\log^{2-\epsilon} n$-approximation in polynomial time for any fixed $\epsilon>0$, unless NP has quasi-polynomial Las Vegas algorithms~\cite{eran1}. Combining these facts with
Lemma~\ref{lem:scss-approx-fpt} implies that SCSS is a W[1]-hard problem that is not known to admit a constant factor
approximation in polynomial time but has a constant factor FPT approximation. This answers a question by
Marx~\cite{daniel-survey}. Previously the only such problem known was a variant of the \mbox{Almost-2-SAT}
problem~\cite{almost2sat-1} called \mbox{2-ASAT-BFL}, due to Marx and Razgon~\cite{daniel-esa-2-approx}.

\section{Open Problems}

In this paper, we have made some progress towards proving Conjecture~\ref{conj:clique-and-set-cover-no-fpt-approx}. We list two of the open problems below:
\begin{itemize}
\item Is there a W[2]-hard problem that admits an $f(OPT)$-approximation in polynomial time, for some function increasing $f$? In Section~\ref{sec:conceptual-differences}, we showed that various W[1]-hard problems admit $f(OPT)$-approximation algorithms in polynomial time, but no such W[2]-hard problem is known.
\item Is there a W[2]-hard problem that admits an FPT approximation algorithm with ratio $\rho$, for any function $\rho$? Grohe and Gr{\"u}ber~\cite{positive-2} showed that the W[1]-hard problem of finding $k$ vertex disjoint cycles in a directed graph has a FPT approximation with ratio $\rho$, for some computable function $\rho$. However, no such W[2]-hard problem is known.
\end{itemize}
It is known~\cite{positive-2,daniel-survey} that the existence of an FPT approximation algorithm with ratio $\rho$ implies that there is an $\rho'(OPT)$-approximation in polynomial time, for some function $\rho'$. Therefore, a positive answer to the first question implies a positive answer to the second question.

\bibliographystyle{abbrv}

\bibliography{u1}

\begin{thebibliography}{10}

\bibitem{dorit}
S.~Arora and C.~Lund.
\newblock {\em Approximation algorithms for NP-hard problems}.
\newblock Ed. D Hochbaum, PWS Publishing Co., Boston, MA, USA, 1997.

\bibitem{c1}
M.~Bellare, S.~Goldwasser, C.~Lund, and A.~Russeli.
\newblock Efficient probabilistically checkable proofs and applications to
  approximations.
\newblock In {\em STOC}, 1993.

\bibitem{pib}
P.~Berman, A.~Bhattacharyya, K.~Makarychev, S.~Raskhodnikova, and
  G.~Yaroslavtsev.
\newblock Improved approximation for the directed spanner problem.
\newblock In {\em ICALP (1)}, pages 1--12, 2011.

\bibitem{DBLP:journals/cj/Cai08}
L.~Cai.
\newblock Parameterized complexity of cardinality constrained optimization
  problems.
\newblock {\em Comput. J.}, 51(1):102--121, 2008.

\bibitem{sparsification-paturi}
C.~Calabro, R.~Impagliazzo, and R.~Paturi.
\newblock A duality between clause width and clause density for sat.
\newblock In {\em IEEE Conference on Computational Complexity}, pages 252--260,
  2006.

\bibitem{CCC}
M.~Charikar, C.~Chekuri, T.~Cheung, Z.~Dai, A.~Goel, S.~Guha, and M.~Li.
\newblock Approximation algorithms for directed {Steiner} problems.
\newblock {\em J. of Algorithms}, 33, 1999.

\bibitem{chlamtac-mec}
E.~Chlamtac, M.~Dinitz, and R.~Krauthgamer.
\newblock Everywhere-sparse spanners via dense subgraphs.
\newblock {\em FOCS}, 2012.

\bibitem{exact-eth}
M.~Cygan, H.~Dell, D.~Lokshtanov, D.~Marx, J.~Nederlof, Y.~Okamoto, R.~Paturi,
  S.~Saurabh, and M.~Wahlstr{\"o}m.
\newblock {On Problems as Hard as CNFSAT}.
\newblock {\em CCC 2012}.

\bibitem{dodis}
Y.~Dodis and S.~Khanna.
\newblock Design networks with bounded pairwise distance.
\newblock In {\em STOC}, 1999.

\bibitem{downey-fellows}
R.~Downey and M.~Fellows.
\newblock {\em Parameterized Complexity}.
\newblock Springer-Verlag, 1999.

\bibitem{dst-dreyfus}
S.~E. Dreyfus and R.~A. Wagner.
\newblock The steiner problem in graphs.
\newblock {\em Networks}, 1(3), 1971.

\bibitem{FGZ}
M.~Feldman, G.~Kortsarz, and Z.~Nutov.
\newblock Improved approximation algorithms for directed steiner forest.
\newblock {\em J. Comput. Syst. Sci.}, 78(1):279--292, 2012.

\bibitem{multi-colored-clique}
M.~Fellows, D.~Hermelin, F.~Rosamond, and S.~Vialette.
\newblock On the parameterized complexity of multiple-interval graph problems.
\newblock {\em Theor. Comp. Sci.}, 410(1), 2009.

\bibitem{dagstuhl-fellows}
M.~R. Fellows, J.~Guo, D.~Marx, and S.~Saurabh.
\newblock {Data Reduction and Problem Kernels (Dagstuhl Seminar 12241)}.
\newblock {\em Dagstuhl Reports}, 2(6):26--50, 2012.

\bibitem{positive-2}
M.~Grohe and M.~Gr{\"u}ber.
\newblock Parameterized approximability of the disjoint cycle problem.
\newblock In {\em ICALP}, pages 363--374, 2007.

\bibitem{guo-directed-steiner}
J.~Guo, R.~Niedermeier, and O.~Such{\'y}.
\newblock Parameterized complexity of arc-weighted directed steiner problems.
\newblock {\em SIAM J. Discrete Math.}, 25(2):583--599, 2011.

\bibitem{gup}
A.~Gupta.
\newblock Improved results for directed multicut.
\newblock In {\em SODA}, pages 454--455, 2003.

\bibitem{eran1}
E.~Halperin and R.~Krauthgamer.
\newblock Polylogarithmic inapproximability.
\newblock In {\em STOC '03: Proceedings of the thirty-fifth annual ACM
  symposium on Theory of computing}, pages 585--594, New York, NY, USA, 2003.
  ACM Press.

\bibitem{hast}
J.~H{\aa}stad.
\newblock Clique is hard to approximate within $n^{1-\epsilon}$.
\newblock In {\em FOCS}, 1996.

\bibitem{eth-paturi}
R.~Impagliazzo and R.~Paturi.
\newblock On the complexity of k-sat.
\newblock {\em J. Comput. Syst. Sci.}, 62(2), 2001.

\bibitem{ugc}
S.~Khot.
\newblock {On the Unique Games Conjecture}.
\newblock In {\em FOCS}, page~3, 2005.

\bibitem{reduction-lecture-notes}
Lecture-Notes.
\newblock
  http://www.cs.washington.edu/education/courses/cse533/05au/lecture14.pdf.

\bibitem{fpt-eth}
D.~Lokshtanov, D.~Marx, and S.~Saurabh.
\newblock Known algorithms on graphs on bounded treewidth are probably optimal.
\newblock In {\em SODA}, pages 777--789, 2011.

\bibitem{survey-eth}
D.~Lokshtanov, D.~Marx, and S.~Saurabh.
\newblock Lower bounds based on the exponential time hypothesis.
\newblock {\em Bulletin of the EATCS}, 105:41--72, 2011.

\bibitem{DBLP:journals/jacm/LundY94}
C.~Lund and M.~Yannakakis.
\newblock On the hardness of approximating minimization problems.
\newblock {\em J. ACM}, 41(5):960--981, 1994.

\bibitem{daniel-survey}
D.~Marx.
\newblock Parameterized complexity and approximation algorithms.
\newblock {\em Comput. J.}, 51(1), 2008.

\bibitem{marx-ccc2010}
D.~Marx.
\newblock Completely inapproximable monotone and antimonotone parameterized
  problems.
\newblock {\em Journal of Computer and System Sciences}, 79(1), 2013.

\bibitem{daniel-esa-2-approx}
D.~Marx and I.~Razgon.
\newblock Constant ratio fixed-parameter approximation of the edge multicut
  problem.
\newblock {\em Inf. Process. Lett.}, 109(20):1161--1166, 2009.

\bibitem{m3}
D.~Marx and I.~Razgon.
\newblock Fixed-parameter tractability of multicut parameterized by the size of
  the cutset.
\newblock In {\em STOC}, pages 469--478, 2011.

\bibitem{r3}
D.~Moshkovitz.
\newblock {The Projection Games Conjecture and The NP-Hardness of ln
  n-Approximating Set-Cover}.
\newblock {\em APPROX}, 2012.

\bibitem{dana}
D.~Moshkovitz and R.~Raz.
\newblock {Two-query PCP with subconstant error}.
\newblock {\em J. ACM}, 57(5), 2010.

\bibitem{almost2sat-1}
I.~Razgon and B.~O'Sullivan.
\newblock Almost 2-sat is fixed-parameter tractable.
\newblock {\em J. Comput. Syst. Sci.}, 75(8):435--450, 2009.

\bibitem{Zu}
D.~Zuckerman.
\newblock Linear degree extractors and the inapproximability of max clique and
  chromatic number.
\newblock In {\em STOC}, pages 681--690, 2006.

\end{thebibliography}

\end{document}